\numberwithin{equation}{section}
\renewcommand{\section}{\@startsection{section}{1}{0pt}{20pt}{6pt}{\large\bf}}
\renewcommand{\@seccntformat}[1]{\csname the#1\endcsname.\ }
\def\footnoterule{\kern -3pt \hrule width 2.7 true cm \kern 2.6pt}
\def\ni{\noindent}
\def\vs{\vspace}
\def\hs{\hspace}
\def\EE{\mathsf E}
\def\PP{\mathsf P}
\def\cC{\mathcal C}
\def\cD{\mathcal D}
\def\cK{\mathcal K}
\def\cF{{\cal F}}
\def\R{I\!\!R}
\def\L{I\!\!L}
\def\wt{\widetilde}
\def\wh{\widehat}
\def\e{\text{e}}
\newcommand{\eps}{\varepsilon}
\newcommand{\p}{\! +\! }
\newcommand{\m}{\! -\! }
\newtheorem{theorem}{Theorem}[section]
\newtheorem{remark}[theorem]{Remark}
\begin{document}

\title{\textbf{Optimal Mean-Reverting  Spread Trading: \\Nonlinear Integral Equation Approach  }}
\author{Yerkin Kitapbayev\thanks{Questrom School of Business, Boston University, Boston MA 02215; email:\,\mbox{yerkin@bu.edu}.
} \and Tim Leung\thanks{Applied Math Dept,  University of Washington, Seattle WA 98195; email:\,\mbox{timleung@uw.edu}.} }
\date{\today }
\maketitle


{\par \leftskip=2.6cm \rightskip=2.6cm \footnotesize
 We study several optimal stopping problems that arise from trading a mean-reverting price spread over a finite horizon.  Modeling the    spread by the Ornstein-Uhlenbeck  process, we analyze three different trading strategies: (i) the long-short strategy; (ii) the short-long strategy, and (iii) the chooser strategy, i.e. the trader can enter into the   spread by taking either long or short position.
In each of these   cases, we solve  
an  optimal double stopping problem to determine the   optimal timing for starting and subsequently closing the position.   We utilize  the local time-space calculus of  \cite{Peskir2005a} and derive the nonlinear integral equations of Volterra-type that uniquely characterize  the  boundaries associated with the optimal timing decisions in all three problems.  These  integral equations are   used to numerically compute the optimal  boundaries. 
   \par}

\vspace{10pt}

\begin{small}

{{ JEL Classification: C41, G11, G12}}

{{MSC2010:} Primary 91G20, 60G40. Secondary 60J60, 35R35, 45G10.}

{{Keywords:} spread trading, optimal stopping,
OU process, free-boundary problem,\\
local time-space calculus, integral equation}
\end{small}

\vs{-18pt}

\vs{-18pt}


\newpage
\section{Introduction}
Spread trading is a common strategy used by many traders in various markets, including equity,  fixed income, currency, and futures markets. In a spread trade, traders construct  a mean-reverting spread  by simultaneously taking positions in two or more highly correlated or co-moving assets. The proliferation of exchange-traded funds (ETFs) has further popularized spread trading  
as some ETFs are designed to replicate identical or similar assets/index. The strategy involves    opening and subsequently closing a position based on the sign and magnitude of the spread.   Therefore,  the risk of trading changes  from  the directional price movements of each asset to  the fluctuation of the spread over time. 



The core of the spread trading    strategies lies in the timing to enter and exit the market. For example, \cite{gatev2006pairs} examined  the
historical returns from the buy-low-sell-high strategy where the entry and exit  levels
are set as $\pm 1$ standard deviation from the long-run mean. Similarly,  \cite{Avellaneda2010}
 considered starting and ending a pairs trade based on a fixed distance of the spread from its mean. In \cite{elliott2005pairs}, the market entry timing is modelled by the
first passage time of an Ornstein-Uhlenbeck (OU) process, followed by an exit on a fixed future date.
In these studies, the   trading rules  are not derived endogenously based  on a given objective but are  prescribed in an \textit{ad hoc}  manner. While these naive trading rules have the advantage of being very simple and explicit, their common drawback is the lack of optimality justification. 

Alternatively,  a host of related studies apply  stochastic  optimal control techniques to determine the  optimal timing strategies for an OU price spread.  \cite{Ekstrom2010}  analyzed a   optimal single stopping problem for liquidating   a spread  position on an  infinite horizon under the OU model with no transaction
costs. \cite{song2013optimal} considered  an optimal switching approach for trading a spread repeated   over an infinite horizon with transaction costs, and solved for the optimal entry and exit thresholds.  Also over an infinite horizon, \cite{LeungLi2014OU,meanreversionbook2016},   solved anlaytically and numerically an  optimal double stopping problem to obtain the entry and exit levels  for trading an OU price spread with transaction costs as well as    a stop-loss exit. Also,   \cite{song2009stochastic} proposed and implemented a numerical stochastic approximation scheme to
solve for the optimal buy-low-sell-high strategies over a finite horizon.


 In this paper, we study several optimal stopping problems that arise from trading a mean-reverting   price spread over a finite horizon.  We model the stochastic spread directly by an OU process, and  analyze three different trading strategies. The first one involves starting by going long on the spread and reverse the position to close, and the second   represents the opposite sequence of trades -- short to open, long to close. Moreover, as  the trader ponders when to enter the market, he/she can enter by starting either with a long or short position. This gives the trader a \textit{chooser option} to be exercised upon market entry. Once the first position is committed, the trader faces an optimal timing problem to exit the market.  For each of these strategies, we solve  
an optimal double stopping problem to determine the   optimal timing for starting and subsequently closing the position.

Our method of solution utilizes    local time-space calculus of  \cite{Peskir2005a} to derive the Volterra-type integral equations  that uniquely characterize  the  boundaries associated with the optimal timing decisions in all three trading problems.  The  nonlinear equations are   useful not only for the analytical representation but also numerical computation of the value functions and optimal boundaries.   Unlike its perpetual analogue (see e.g. \cite{HFTbook}, and \cite{LeungLi2014OU}), the finite-horizon trading problems studied herein do not admit  closed-form expressions for the value functions or optimal boundaries. Nevertheless, our application of    local time-space calculus allows us to express the optimal enter and exit boundaries as unique solutions to recursive integral equations. We provide illustrations of the boundaries and discuss their properties and financial implications. 


Our paper contributes to the literature of  optimal spread trading by providing an optimal double stopping  approach together with Volterra-type  integral equations for determining and analyzing the optimal boundaries over a finite horizon. It also introduces a number of  new Volterra-type  integral equations based on an OU underlying process.  To the best of our knowledge, the integral equations and analytical results for optimal pair trading herein are new. Our integral equation approach has been recently applied to the valuation of swing options with  multiple stopping opportunities, see \cite{Kita1}.  Modeling the spread between the futures and spot by a Brownian bridge,   \cite{futuresDaiKwok} also consider a chooser option embedded in the trader's timing  to enter the market.    

The paper is organized as follows. We formulate the optimal trading
problem in Section 2.  The  solutions for the three trading problems and analyses of the optimal timing strategies  are presented in Sections 3, 4, and 5, respectively.    Section 6  is included to discuss the incorporation of transaction costs and associated challenges.
 \vs{6pt}

\section{Problem overview}
We fix a finite trading horizon $[0,T]$, and  filtered probability space $(\Omega, \mathcal{F} , (\mathcal{F}_t), \mathbb{P})$, where $\mathbb{P}$ is a subjective probability measure held by the trader, and $(\mathcal{F}_t)_{0\le t\le T}$ is the filtration to which    every process defined herein  is adapted. 

We  model the price  spread $X$ by the  Ornstein-Uhlenbeck (OU) process introduced by \cite{Ornstein1930}:
\begin{align}\label{OU}\hs{5pc}
dX_t=\mu(\theta\m X_t)dt+\sigma dB_t, \quad X_0=x,
\end{align}
where $B$ is a standard Brownian motion, and the parameters $\mu, \sigma>0$, and $\theta\in \R$, represent the speed of mean reversion, volatility, and long-run mean of the process, respectively. We refer the reader to, e.g.,    Section 2.1 of \cite{LeungLi2014OU}, for the detailed maximum likelihood   procedure   to estimate the parameters for this model with  backtested examples of spread trading. 

The solution to \eqref{OU} is
well-known as
\begin{align}\label{Sol}\hs{5pc}
X_t=x\e^{-\mu t}+\theta(1\m \e^{-\mu t})+\sigma\int_0^t \e^{-\mu(t-s)}dB_s, \qquad t\ge 0.
\end{align}
 At any fixed time $t$, the random variable $X_t$ has a normal distribution with the probability density function
\begin{align}\label{pdf}\hs{5pc}
p(\wt{x};t,x)=\frac{1}{\sqrt{2 \pi\; var(t,x)}}\,\e^{-\frac{(\wt{x}-m(t,x))^2}{2\;var(t,x)}}\,,
\end{align}
where the mean $m(t,x)$ and variance $var(t,x)$ functions are given by
\begin{align}\label{m}\hs{5pc}
m(t,x)&=x\e^{-\mu t}+\theta(1\m \e^{-\mu t}),\\
\label{var}var(t,x)&=\frac{\sigma^2}{2\mu}(1-\e^{-2\mu t}),
\end{align}
for $(t,x) \in \R_+\times \R$. The infinitesimal operator of $X$ is given as
\begin{align}\label{OU-operator}\hs{5pc}
\L_X F(x)=\mu(\theta\m x)F'(x)+\frac{\sigma^2}{2} F''(x)
\end{align}
for $x\in \R$ and $F\in C^2 (\R)$.
\vs{2pt}

The trader seeks to establish  a position and subsequently closes it by time $T>0$. We analyze three trading  strategies: (i) the 
long-short strategy, whereby the trader longs the spread first and later reverses the position to close (see Section \ref{ls}); (ii) short-long strategy, whereby the trader shorts the spread to start and then  close by taking the opposite (long) position (see Section \ref{sl}), and (iii)
 the chooser strategy, i.e. the trader  can   take either a  long or short position in the spread when entering the market  (see Section \ref{o}), and subsequently liquidates by taking the opposite position. For each trading problem, if it is optimal for the trader not  to trade at all, then in the absence of trading fees we represent it as if she/he enters and exits simultaneously at $T$,  giving a zero return. In other words, the trader must enter and exit by time  $T$. In the presence of transaction costs, simultaneous entry and exit will generate a strictly negative return, so the problem is more delicate and difficult (see Section 6).

\section{Optimal long-short strategy}\label{ls}
The trader who  enters the market    by taking a long position and subsequently closes it before time $T$ faces the following optimal double stopping problem:
 \begin{equation} \label{l-s-problem-1} \hs{7pc}
V^1(t,x)=\sup \limits_{0\le\tau\le\zeta\le T-t}\EE\left[\e^{-r\zeta} X^x_\zeta\m \e^{-r\tau} X^x_\tau\right],
 \end{equation}
defined at current time $t\in [0,T)$ and spread value $x\in \R$. Here, 
  $X^x$ represents   the process $X$ starting at $X^x_0=x$,   $r>0$ is the trader's discount  rate, and
 the supremum is taken over all pairs of $\cF^{X}$- stopping times $(\tau,\zeta)$ such that $\tau\le \zeta\le T-t$.
The time $\tau$ represents the strategy for buying the spread and $\zeta$ is the liquidation time.

Using standard arguments (see  e.g.   \cite{Carmona2008}), the problem \eqref{l-s-problem-1} can be reduced to  a sequence of two   optimal single stopping problems. Precisely, the optimal liquidation timing problem is represented by
 \begin{equation} \label{l-s-problem-2} \hs{7pc}
V^{1,L}(t,x)=\sup \limits_{0\le\zeta\le T-t}\EE \left[\e^{-r\zeta} X^x_\zeta\right].
 \end{equation} This value function $V^{1,L}$ represents the maximum expected value of a long position in the spread $X$, but the trader will need to pay the spread value for this position. Therefore, the difference between this value function and the spread value is viewed as the reward the trader received upon entry. Therefore, the trader's optimal entry timing problem is given by 
  \begin{equation} \label{l-s-problem-3} \hs{7pc}
V^{1,E}(t,x)=\sup \limits_{0\le\tau\le T-t}\EE \left[\e^{-r\tau}(V^{1,L}(t\p \tau,X^x_\tau)\m X^x_\tau)\right].
 \end{equation}
We have the equality: $V^1=V^{1,E}$, and  that the optimal stopping times in problems \eqref{l-s-problem-2}-\eqref{l-s-problem-3} are optimal for  the original problem  in  \eqref{l-s-problem-1}. We first solve the problem \eqref{l-s-problem-2} in Section~\ref{ls-exit}  and then the problem
 \eqref{l-s-problem-3} in Section~\ref{ls-entry}.
\vs{6pt}

 \subsection{Optimal exit problem}\label{ls-exit}
We now discuss how to represent the optimal stopping  problem \eqref{l-s-problem-2} as a free boundary problem, which is then analyzed   using the local time-space calculus (see \cite{Peskir2005a}).  First, using that the payoff function in \eqref{l-s-problem-2} is continuous and standard arguments (see e.g. Corollary 2.9 (finite horizon) with Remark 2.10 in \cite{Peskir2006}), we define the  continuation and exit regions:
\begin{align} \label{C1} \hs{5pc}
&\cC^{1,L}= \{\, (t,x)\in[0,T)\! \times\! \R:V^{1,L}(t,x)>x\, \}, \\[3pt]
 \label{D1}&\cD^{1,L}= \{\, (t,x)\in[0,T)\! \times\! \R:V^{1,L}(t,x)=x\, \},
 \end{align}
which are linked to  the optimal exit time in \eqref{l-s-problem-2} given  by
\begin{align} \label{OST} \hs{5pc}
\zeta_*^{1,L}=\inf\ \{\ 0\leq s\leq T\m t:(t\p s,X^x_{s})\in\cD^{1,L}\ \}.
 \end{align}

Let us define  the   function
\begin{align} \label{L}\hs{2pc}
\cK^{1,L}(u,x,z):=&-\e^{-ru}\EE \big[H^{1,L}(X^x_u) I(X^x_u \ge z)\big]\\=&-\e^{-ru}\int_z^{\infty} H^{1,L}(\wt{x})\,p(\wt{x};u,x)\,d\wt{x}
 \end{align}
 for $u\ge 0$ and $x,z\in\R$, where the normal density $p$ is given in \eqref{pdf} and the function $H^{1,L}$ is an affine function in $x$ defined by
 \begin{align} \label{H-1} \hs{6pc}
H^{1,L}(x)=-(\mu\p r)x+\mu\theta.
\end{align}

 The main result of this section is the integral equation representation of the free boundary associated with the optimal exit time $\zeta_*^{1,L}$.
\begin{theorem}\label{th:1}
The optimal stopping time  for \eqref{l-s-problem-2} is given   by
\begin{align} \label{OST-2} \hs{5pc}
\zeta_*^{1,L}=\inf\ \{\ 0\leq s\leq T\m t:X^x_{s}\ge b^{1,L}(t\p s) \ \}.
 \end{align}
 The function  $b^{1,L}(\cdot)$ is  the optimal exit boundary corresponding to  \eqref{l-s-problem-2}, and it can be characterized as the unique solution to a nonlinear integral equation of Volterra type, that is, 
\begin{align}\label{th-1-2} \hs{4pc}
b^{1,L}(t)=\e^{-r(T-t)}m(T\m t,b^{1,L}(t)) +\int_0^{T-t} \cK^{1,L}(u,b^{1,L}(t),b^{1,L}(t\p u))du
\end{align}
for $t\in[0,T]$ in the class of continuous decreasing functions $t\mapsto b^{1,L}(t)$ with $b^{1,L}(T)=x^*=\mu\theta/(\mu+r)$.
The value function $V^{1,L}$ in \eqref{l-s-problem-2} admits the  representation
\begin{align}\label{th-1-1} \hs{4pc}
V^{1,L}(t,x)=\e^{-r(T-t)}m(T\m t,x) +\int_0^{T-t}\cK^{1,L}(u,x,b^{1,L}(t\p u))du
\end{align}
for $t\in[0,T]$ and $x\in \R$.
\end{theorem}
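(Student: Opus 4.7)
The plan is to follow the classical finite-horizon free-boundary program of Peskir: establish the regularity and geometry of the value function, identify and regularize the optimal boundary, apply the change-of-variable formula of \cite{Peskir2005a}, and close the argument with a uniqueness proof for the Volterra equation.

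\textbf{Step 1 (Regularity and shape of the stopping set).} I would first verify that $V^{1,L}$ is continuous on $[0,T]\times \R$ (standard, using $L^1$-continuity of $x\mapsto X^x$ and pathwise comparison), and non-decreasing in the remaining horizon $T-t$. For $G(x)=x$, Itô's formula gives
\begin{align}\hs{4pc}
\EE[\e^{-r\zeta}X^x_\zeta]= x+\EE\int_0^\zeta \e^{-rs} H^{1,L}(X^x_s)\,ds,
\end{align}
since $(\L_X\m r)G(x)=\mu(\theta\m x)\m rx=H^{1,L}(x)$. Because $H^{1,L}(x)\ge 0 \Leftrightarrow x\le x^*=\mu\theta/(\mu+r)$, it is strictly profitable to continue whenever $x\le x^*$, so $\cD^{1,L}\subset\{x>x^*\}$. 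A monotonicity argument in $x$ (using the affine dependence of $X^x$ on $x$) shows that $x\mapsto V^{1,L}(t,x)-x$ is decreasing, hence $\cD^{1,L}$ is of the form $\{x\ge b^{1,L}(t)\}$ for some function $b^{1,L}\ge x^*$.

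\textbf{Step 2 (Boundary properties and smooth fit).} Monotonicity of $t\mapsto V^{1,L}(t,x)-x$ (fewer opportunities when less time is left) implies that $b^{1,L}$ is decreasing in $t$. Right-continuity of $b^{1,L}$ follows from closedness of $\cD^{1,L}$; left-continuity is obtained by the standard Peskir argument using the strong Markov property and the fact that $b^{1,L}>x^*$ on $[0,T)$ together with a local analysis near the boundary. The terminal value $b^{1,L}(T^-)=x^*$ follows because as $t\to T$ the expected gain over a small time interval is dictated by the sign of $H^{1,L}$. Next I would prove the smooth-fit condition $V^{1,L}_x(t, b^{1,L}(t)-)=1$ by the standard local-time/comparison argument (writing difference quotients from inside the continuation region, using monotonicity of $V^{1,L}-x$, and passing to the limit).

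\textbf{Step 3 (Integral representations via Peskir's formula).} With smooth fit in hand, I would apply the change-of-variable formula of \cite{Peskir2005a} to the process $s\mapsto \e^{-rs}V^{1,L}(t\p s, X^x_s)$ on $[0,T\m t]$. The local-time term on $b^{1,L}$ vanishes by smooth fit; on $\cC^{1,L}$ one has $(\partial_t+\L_X\m r)V^{1,L}=0$; on $\cD^{1,L}$ where $V^{1,L}(t,x)=x$ one has $(\partial_t+\L_X\m r)V^{1,L}=H^{1,L}$. Taking expectations, using $V^{1,L}(T,x)=x$ and $\EE[X^x_{T-t}]=m(T\m t,x)$, yields
\begin{align}\hs{4pc}
V^{1,L}(t,x)=\e^{-r(T-t)}m(T\m t,x)+\int_0^{T-t}\cK^{1,L}(u,x,b^{1,L}(t\p u))\,du,
\end{align}
which is \eqref{th-1-1}. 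Substituting $x=b^{1,L}(t)$ and invoking the continuous-fit identity $V^{1,L}(t,b^{1,L}(t))=b^{1,L}(t)$ produces the integral equation \eqref{th-1-2}.

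\textbf{Step 4 (Uniqueness).} The main technical obstacle is uniqueness of $b^{1,L}$ in the class of continuous decreasing functions with $b^{1,L}(T)=x^*$. I would follow the four-step scheme of Peskir--Shiryaev: given any such candidate $c$, define $V^c(t,x)$ by the right-hand side of \eqref{th-1-1} with $c$ in place of $b^{1,L}$. Then (i) $V^c(t,x)=x$ for $x\ge c(t)$, obtained from \eqref{th-1-2} with $c$ and the Markov property; (ii) $V^c$ satisfies the parabolic PDE $(\partial_t+\L_X\m r)V^c=0$ strictly above $c$; (iii) comparison via Itô's formula gives $V^c\le V^{1,L}$ everywhere, which in turn forces $c\ge b^{1,L}$; (iv) a symmetric comparison starting from points below the candidate boundary, combined with the nonpositivity of $H^{1,L}$ on $\{x\ge x^*\}$, yields $c\le b^{1,L}$. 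The monotonicity and terminal conditions are used crucially here to rule out intersections and ensure the comparison works uniformly in $t$. Uniqueness then gives \eqref{OST-2} as the optimal stopping time, completing the proof.
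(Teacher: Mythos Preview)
Your proposal is correct and follows essentially the same program as the paper: continuity of $V^{1,L}$, the It\^o identity with $H^{1,L}$ to locate the stopping set, monotonicity arguments yielding a single decreasing boundary, smooth fit, continuity and terminal value of $b^{1,L}$, application of the local time-space formula of \cite{Peskir2005a} to obtain \eqref{th-1-1}--\eqref{th-1-2}, and Peskir's uniqueness scheme. One small slip to fix: in Step~4(ii) the continuation region here is $\{x<c(t)\}$, so the PDE for $V^c$ (and the martingale property in the comparison) holds \emph{below} $c$, not above; the rest of your comparison argument is the standard one the paper invokes.
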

\begin{proof}

The proof is provided in several steps.
\vs{6pt}

1. \emph{Continuity of $V^{1,L}$}. Here we show that the price function $V^{1,L}$ is continuous on $[0,T)\times \R$.
First we note that due to the linearity of the payoff function we obtain the convexity of $x\mapsto V^{1,L}(t,x)$ for fixed $t\in[0,T)$.  Therefore,  it follows that $x\mapsto V^{1,L}(t,x)$ is continuous for given $t\in [0,T)$. Thus to prove that
 $V^{1,L}$ is continuous on $[0,T)\times \R$ it is enough to show that $t\mapsto V^{1,L}(t,x)$ is continuous uniformly over $[0,T]$ for each $x\in \R$ given and fixed. For this, take any $t_1<t_2$ in $[0,T]$
and let $\zeta_1$ be an optimal stopping time for $V^{1,L}(t_1,x)$.  Setting $\zeta_2=\zeta_1\wedge (T-t_2)$ and using that
$t\mapsto V^{1,L}(t,x)$ is decreasing on $[0,T]$, we have
\begin{align} \label{cont-1} \hs{3pc}
0\le V^{1,L}(t_1,x)-V^{1,L}(t_2,x)\le \EE \left[\e^{-r\zeta_1} X^x_{\zeta_1}\right]-\EE \left[\e^{-r\zeta_2} X^x_{\zeta_2}\right]
\le \EE \left[|X^x_{\zeta_1}\m X^x_{\zeta_2}|\right].
\end{align}
Letting first $t_2-t_1\rightarrow0$ and using $\zeta_1-\zeta_2\rightarrow0$ we see that $V^{1,L}(t_1,x)-V^{1,L}(t_2,x)\rightarrow0$ by the dominated convergence theorem. This shows that
$t\mapsto V^{1,L}(t,x)$ is continuous uniformly over $[0,T]$, and the proof of the initial claim is complete.
\vs{6pt}

2. Now we obtain some initial insights into the structure of the exit region $\cD^{1,L}$.
 For this, we use Ito's formula to see that
\begin{align} \label{Ito} \hs{6pc}
 \EE \left[\e^{-r\zeta}X^{x}_\zeta\right]=\;x+\EE\left[ \int_0^\zeta \e^{-rs}H^{1,L}(X^{x}_s)ds\right]
 \end{align}
for $x\in\R$ and any stopping time $\zeta$ where
the function $H^{1,L}$ was defined in \eqref{H-1}.

The function $H^{1,L}$ is strictly decreasing with single root $x^*=\mu\theta/(\mu+ r)$.
 The equation \eqref{Ito} shows that it is not optimal to exit the position when $X_t< x^*$ as $H^{1,L}(X_t)>0$ in this region and thus the integral term
on the right-hand side of \eqref{Ito} is positive. For this one can make use of the first exit time from a
sufficiently small time-space ball centred at the point where $H^{1,L}>0$.
Another implication of \eqref{Ito} is that the exit region is non-empty for all $t\in[0,T)$, as for large $x\uparrow \infty$ the integrand $H^{1,L}$
is very negative and thus due to the lack of time to compensate the negative $H^{1,L}$, it is optimal to exit at once.
\vs{6pt}

3. \emph{Optimal exit boundary}. Next we prove further properties of the exit region $\cD^{1,L}$ and define the optimal exit boundary.

$(i)$ As the payoff function in \eqref{l-s-problem-2} is time-independent and the process $X$ is time-homogeneous,
it follows that the map $t\mapsto V^{1,L}(t,x)$ is non-increasing on $[0,T]$ for each $x\in \R$ so that $V^{1,L}(t_1,x)\m x\ge V^{1,L}(t_2,x)\m x\ge0$
for $0\le t_1 <t_2 <T $ and $x\in\R$. Now if we take a point $(t_1,x)\in \cD^{1,L}$, i.e. $V^{1,L}(t_1,x)=x$, then $(t_2,x)\in \cD^{1,L}$ as well, which shows that the exit region expands when $t$ increases.
In other words, $\cD^{1,L}$ is right-connected.
\vs{2pt}

$(ii)$ Now let us take $t>0$, $x>y$, and we denote by $\zeta=\zeta(t,x)$ the optimal stopping time
for $V^{1,L}(t,x)$. Then using \eqref{Ito} we have  
\begin{align} \label{Ito-a} \hs{5pc}
V^{1,L}(t,x)-V^{1,L}(t,y) &\le  \EE \left[\e^{-r\zeta} X^{x}_\zeta\right]- \EE \left[\e^{-r\zeta}X^{y}_\zeta\right]\\
&=x-y+\EE \left[\int_0^\zeta \e^{-rs}\left(H^{1,L}(X^{x}_s)\m H^{1,L}(X^y_s)\right)ds\right]\nonumber\\
&\le x-y\nonumber
 \end{align}
where we used that $H^{1,L}$ is decreasing and $X^x_{\cdot}\ge X^y_{\cdot}$ by \eqref{Sol}. Now if we let $(t,y)\in \cD^{1,L}$, i.e. $V^{1,L}(t,y)=y$, we have   $V^{1,L}(t,x)=x$, i.e. $(t,x)\in \cD^{1,L}$.
Therefore we obtain an up-connectedness of the exit region $\cD^{1,L}$.
\vs{2pt}

$(iii)$ From $(i)$-$(ii)$ and paragraph 2 above we can conclude that
there exists an optimal exit boundary $b^{1,L}:[0,T]\rightarrow \R$ such that
\begin{align}   \hs{5pc}
\zeta_*^{1,L}=\inf\ \{\ 0\leq s\leq T\m t:X^x_{s}\ge b^{1,L}(t\p s) \ \}
 \end{align}
is optimal in \eqref{l-s-problem-2} and $x^*<b^{1,L}(t)<\infty$ for $t\in[0,T)$. Moreover, $b^{1,L}$ is decreasing on $[0,T)$.
\vs{6pt}

4. \emph{Smooth-fit}. Now we prove that the smooth-fit condition along the boundary $b^{1,L}$ holds
 \begin{align}\label{SF}\hs{5pc}
V^{1,L}_x (t, b^{1,L}(t)-)=V^{1,L}_x(t,b^{1,L}(t)+)=1
\end{align}
for all $t\in[0,T)$.
\vs{2pt}

$(i)$ First let us fix a point $(t,x)\in [0,T)\times\R$ lying on the boundary $b^{1,L}$ so that $x=b^{1,L}(t)$.
Then we have
\begin{align} \label{SF-1} \hs{5pc}
\frac{V^{1,L}(t,x)-V^{1,L}(t,x\m\eps)}{\eps}&\le \frac{x-(x\m\eps)}{\eps}=1
\end{align}
and taking the limit as $\eps\downarrow 0$, we get
\begin{align} \label{SF-2} \hs{5pc}
V^{1,L}_x (t,x-)\le 1
\end{align}
 where the left-hand derivative exists by the convexity of $x\mapsto V^{1,L}(t,x)$ on $\R$ for any fixed $t\in[0,T)$.
\vs{2pt}

$(ii)$ To prove the reverse inequality, we set $\zeta_\eps=\zeta_\eps(t,x\m\eps)$ as an optimal stopping time for $V^{1,L}(t,x-\eps)$.
Using that $X$ is a regular diffusion and $t\mapsto b^{1,L}(t)$ is decreasing,  we see that 
$\zeta_\eps\to0$ as $\eps\to0$ $\PP$-a.s. We get  
 \begin{align}\label{SF-3}\hs{4pc}
\frac{1}{\eps}\Big(V^{1,L}(t,x)-V^{1,L}(t,x\m\eps)\Big)
 \ge\frac{1}{\eps}\EE \left[\e^{-r\zeta_{\eps}}\left( X_{\zeta_{\eps}}^{x}\m X_{\zeta_{\eps}}^{x-\eps}\right) \right]=\EE \left[\e^{-(r+\mu)\zeta_{\eps}}\right]
\end{align}
where we used the solution \eqref{Sol} for $X$. Clearly, the right-hand side of \eqref{SF-3} goes to 1 as $\eps\to0$.
Thus taking the limits as $\eps\to0$, we get the inequality
\begin{align} \label{SF-4}\hs{7pc}
V^{1,L}_x (t,x-)\ge 1
\end{align}
for $t\in[0,T)$. Combining \eqref{SF-2} and \eqref{SF-4}, we  obtain \eqref{SF}.
\vs{6pt}

5. \emph{Continuity of $b^{1,L}$}. Here we prove that the boundary $b^{1,L}$ is continuous on $[0,T]$ and that $b^{1,L}(T-)=x^*$.
The proof is provided in 3 steps and follows the approach proposed by \cite{DeA1}.

$(i)$ We first show that $b^{1,L}$ is right-continuous. Let us fix $t\in [0,T)$ and take a
sequence $t_n \downarrow t$ as $n\rightarrow \infty$. As $b^{1,L}$ is decreasing, the right-limit $b^{1,L}(t+)$ exists and
$(t_n, b^{1,L}(t_n))$ belongs to $\cD^{1,L}$ for all $n\ge 1$. Recall that $\cD^{1,L}$ is closed so that $(t_n,b^{1,L}(t_n))\to (t,b^{1,L}(t+))\in \cD^{1,L}$ as $n\to \infty$ and we may conclude that $b^{1,L}(t+)\ge b^{1,L}(t)$. The fact that $b^{1,L}$
is decreasing gives the reverse inequality and thus $b^{1,L}$ is right-continuous as claimed.
\vs{+2pt}

$(ii)$ Now we prove that $b^{1,L}$ is also left-continuous. Assume that there exists $t_0\in(0,T)$ such that $b^{1,L}(t_0-)>b^{1,L}(t_0)$.
Let us set $x_1 = b^{1,L}(t_0)$ and $x_2 = b^{1,L}(t_0-)$ so that $x_1 < x_2$. For $\eps \in (0,(x_2- x_1)/2)$ given and fixed, let $\varphi_\eps : (-\infty,\infty) \rightarrow [0,1]$ be a $C^\infty$- function satisfying (i) $\varphi_\eps (x) = 1$ for $x\in [x_1 \p\eps, x_2 \m\eps]$ and
(ii) $\varphi_\eps (x) = 0$ for $x\in (-\infty,x_1\p \eps/2]\cup [x_2\m\eps/2,\infty)$. Letting $\L^*_X $ denote the adjoint of
$\L_X$, recalling that $t\rightarrow  V^{1,L}(t,x)$ is decreasing on $[0,T]$ and that $V^{1,L}_t\p \L_X V^{1,L}  \m rV^{1,L}= 0$ on $\cC^{1,L}$, we find integrating by parts (twice) that
\begin{align}\label{cont01}\hs{1pc}
0\ge \int^{x_2}_{x_1}{\varphi(x)V^{1,L}_t(t_0 \m \delta,x)dx}=-\int^{x_2}_{x_1}{V^{1,L}(t_0 \m \delta,x)\left(\L_X^*\varphi(x)\m r\varphi(x)\right)dx}
\end{align}
for $\delta\in (0,t_0\wedge  (\eps/2))$ so that $\varphi_\eps (x_2\m\delta) = \varphi'_\eps (x_2\m\delta)  = 0$ as needed. Letting $\delta \downarrow 0$ it follows using the dominated convergence theorem and integrating by parts (twice) that
\begin{align}\label{cont02}\hs{1pc}
0&\ge -\int^{x_2}_{x_1}{V^{1,L}(t_0,x)\left(\L_X^*\varphi(x)\m r\varphi(x)\right)dx}
=-\int^{x_2}_{x_1}{x\left(\L_X^*\varphi(x)\m r\varphi(x)\right)dx}\\
&=-\int^{x_2}_{x_1}{\left(\L_X x\m rx\right)\varphi(x) dx}=-\int^{x_2}_{x_1}{H^{1,L}(x)\varphi(x)dx}.\nonumber
\end{align}
Letting $\eps\downarrow  0$ we obtain
\begin{align}\label{cont03}\hs{5pc}
0\ge -\int^{x_2}_{x_1}{H^{1,L}(x)dx}>0
\end{align}
as $x\rightarrow H^{1,L}(x)$ is strictly negative on $(x_1,x_2]$. We thus have a contradiction and therefore we may conclude that $b^{1,L}$ is continuous on $[0, T )$ as claimed.
\vs{+2pt}

$(iii)$ To prove that $b^{1,L}(T-)=x^*$ we can use the same arguments as those in $(ii)$ above with $t_0=T$ and suppose that $b^{1,L}(T-)>x^*$.
\vs{6pt}

6. \emph{Free-boundary problem}. The facts proved in paragraphs 1-5 above and standard arguments based on the strong Markov property (see e.g. \cite{Peskir2006}) lead to the following free-boundary problem for the value function $V^{1,L}$ and unknown boundary $b^{1,L}$:
\begin{align} \label{PDE} \hs{5pc}
&V^{1,L}_t \p\L_X V^{1,L} \m rV^{1,L}=0 &\hs{-30pt}\text{in}\;  \cC^{1,L}\\
\label{IS}&V^{1,L}(t,b^{1,L}(t))=b^{1,L}(t)&\hs{-30pt}\text{for}\; t\in[0,T)\\
\label{SP}&V^{1,L}_x (t,b(t))=1 &\hs{-30pt}\text{for}\; t\in[0,T) \\
\label{FBP1}&V^{1,L}(t,x)>x &\hs{-30pt}\text{in}\; \cC^{1,L}\\
\label{FBP2}&V^{1,L}(t,x)=x &\hs{-30pt}\text{in}\; \cD^{1,L}
\end{align}
where the continuation set $\cC^{1,L}$ and the exit region $\cD^{1,L}$ are given by
\begin{align} \label{CC1} \hs{5pc}
&\cC^{1,L}= \{\, (t,x)\in[0,T)\! \times\! \R:x<b^{1,L}(t)\, \} \\[3pt]
 \label{DD1}&\cD^{1,L}= \{\, (t,x)\in[0,T)\! \times\! \R:x\ge b^{1,L}(t)\, \}.
 \end{align}
The following properties of $V^{1,L}$ and $b^{1,L}$ were also verified above:
\begin{align} \label{Prop-1} \hs{5pc}
&V^{1,L}\;\text{is continuous on}\; [0,T]\times\R\\
\label{Prop-2}&V^{1,L}\;\text{is}\; C^{1,2}\;\text{on}\; \cC^{1,L}\\
\label{Prop-3}&x\mapsto V^{1,L}(t,x)\;\text{is increasing and convex on $\R$ for each $t\in[0,T]$}\\
\label{Prop-4}&t\mapsto V^{1,L}(t,x)\;\text{is decreasing on $[0,T]$ for each $x\in \R$}\\
\label{Prop-5}&t\mapsto b^{1,L}(t)\;\text{is decreasing and continuous on $[0,T]$ with}\; b^{1,L}(T-)=x^*.
\end{align}

7. \emph{Integral equation}.
 We clearly have that the following conditions hold:
$(i)$ $V^{1,L}$ is $C^{1,2}$ on $\cC^{1,L}\cup \cD^{1,L}$;
$(ii)$ $b^{1,L}$ is of bounded variation (due to monotonicity);
 $(iii)$ $V^{1,L}_t\p\L_{X}V^{1,L}-rV^{1,L}$ is locally bounded;
 $(iv)$ $x\mapsto V^{1,L}(t,x)$ is convex;
$(v)$ $t\mapsto V^{1,L}_x (t,b^{1,L}(t)\pm)$ is continuous (recall \eqref{SP}). Hence we can apply the local time-space formula on curves \cite{Peskir2005a}) for $\e^{-rs}V^{1,L}(t\p s,X^x_s)$, along with   \eqref{PDE}, \eqref{SP}, and \eqref{FBP2},  to get 
\begin{align} \label{th-3} \hs{1pc}
&\e^{-rs}V^{1,L}(t\p s,X^x_s)\\
=\;&V^{1,L}(t,x)+M_s\nonumber\\
 &+ \int_0^{s}
\e^{-ru}\left(V^{1,L}_t \p\L_X V^{1,L}
\m rV^{1,L}\right)(t\p u,X^x_u)
 I(X^x_u \neq b^{1,L}(t\p u))du\nonumber\\
 &+\frac{1}{2}\int_0^{s}
\e^{-ru}\left(V^{1,L}_x (t\p u,X^x_u +)-V^{1,L}_x (t\p u,X^x_u -)\right)I\big(X^x_u=b^{1,L}(t\p u)\big)d\ell^{b^{1,L}}_u(X^x)\nonumber\\
  =\;&V^{1,L}(t,x)+M_s +\int_0^{s}
\e^{-ru}H^{1,L}(X^x_u) I(X^x_u \ge b^{1,L}(t\p u))du\nonumber
  \end{align}
where    $M=(M_s)_{s\ge 0}$ is the martingale part, and   $(\ell^{b^{1,L}}_t(X^x))_{t\ge 0}$ is the local time process of $X^x$ at the boundary $b^{1,L}$, given by 
\begin{align} \label{Tanaka} \hs{3pc}
 \ell^{b^{1,L}}_t (X^x):=\PP-\lim_{\eps \downarrow 0}\frac{1}{2\eps}\int_0^{t} I(b^{1,L}(t\p u)\m\eps<X^x_u<b^{1,L}(t\p u)\p\eps)d\left \langle X,X \right \rangle_u.
 \end{align}
 Now
upon letting $s=T\m t$, taking the expectation $\EE$, using the optional sampling theorem for $M$, rearranging terms, noting that
$V^{1,L}(T,x)=x$ for all $x\in \R$ and recalling \eqref{m}, we get \eqref{th-1-1}.
The integral equation \eqref{th-1-2} is obtained by inserting $x=b^{1,L}(t)$ into \eqref{th-1-1} and using \eqref{IS}.
\vs{6pt}

8. \emph{Uniqueness of the solution}. The proof of that $b^{1,L}$ is the unique solution to the equation \eqref{th-1-2} in the class of continuous decreasing functions $t\mapsto b^{1,L}(t)$ is based on arguments originally employed by \cite{Peskir2005b} and omitted here.

\end{proof}

\emph{Numerical algorithm for solution to integral equation}. We proved above that $b^{1,L}$ is the unique solution to the integral  equation \eqref{th-1-2}. Even though  this equation
cannot be solved analytically, it can    be solved  numerically in a straightforward and efficient manner, as we illustrate below and refer to   Chapter 8 of  \cite{DetempleBook} for more details.  In order to numerically solve the integral equation, it is crucial to be able to  compute $\cK^{1,L}$ efficiently. Fortunately, we have the closed-form expression for the function $\cK^{1,L}$ since  the (marginal) distribution of $X_t$ is Gaussian.

Let $N$ be the number of time discretizations, and set  $h=T/N$  and  $t_k=kh$ for $k=0,1,...,N$. This leads to 
the following discrete approximation of the integral equation  \eqref{th-1-2}:
\begin{align}\label{Rem-1} \hs{2pc}
b^{1,L}(t_k)=\e^{-r(T-t)}m(T-t,b^{1,L}(t_k))+h\sum_{l=k}^{N-1} \cK^{1,L}\left(t_k,t_{l+1}\m t_k,b^{1,L}(t_k),b^{1,L}(t_{l+1})\right)
\end{align}
for $k= 0,1,...,N\m1$. Setting $k=N\m1$ and $b^{1,L}(t_N)=x^*$ we   solve the equation
\eqref{Rem-1} numerically and obtain the value of  $b^{1,L}(t_{N- 1})$. Setting $k=N\m2$ and using the values $b^{1,L}(t_{N- 1})$ and 
$b^{1,L}(t_{N})$, we   solve \eqref{Rem-1} numerically for the value $b^{1,L}(t_{N- 2})$. Continuing this  recursion we obtain all  $\{b^{1,L}(t_{N}),b^{1,L}(t_{N-1}),...,b^{1,L}(t_1),b^{1,L}(t_0)\}$  as approximations to  the continuous optimal boundary $b^{1,L}$ at the
points $T, T- h,..., h, 0$.

Finally, the value function \eqref{th-1-1} can be approximated as follows:
\begin{align}\label{Rem-2} \hs{2pc}
V^{1,L}(t_k,x)=\e^{-r(T-t)}m(T-t,x)+h\sum_{l=k}^{N-1} \cK^{1,L}\left(t_k,t_{l+1}\m t_k,x,b^{1,L}(t_{l+1})\right)
\end{align}
for $k= 0,1,...,N\m 1$ and $x\in \R$.

 \subsection{Optimal entry problem}\label{ls-entry}
Having solved for the optimal   timing to exit, we now turn to the optimal entry problem
 \begin{equation} \label{l-s-problem-4} \hs{5pc}
V^{1,E}(t,x)=\sup \limits_{0\le\tau\le T-t}\EE \left[\e^{-r\tau}\left(V^{1,L}(t\p\tau,X^x_\tau)-X^x_\tau \right)\right]
 \end{equation}
where the supremum is taken over all stopping times $\tau\in[0,T-t]$ of $X$. We define the payoff function $G^{1,E}(t,x)=V^{1,L}(t,x)-x$ for $(t,x)\in [0,T)\times \R$.
\vs{6pt}

 We tackle the problem \eqref{l-s-problem-4} using similar arguments as for \eqref{l-s-problem-2}.
We define the continuation and entry regions
\begin{align} \label{C-2} \hs{5pc}
&\cC^{1,E}= \{\, (t,x)\in[0,T)\! \times\! \R:V^{1,E}(t,x)>G^{1,E}(t,x)\, \} \\[3pt]
 \label{D-2}&\cD^{1,E}= \{\, (t,x)\in[0,T)\! \times\! \R:V^{1,E}(t,x)=G^{1,E}(t,x)\, \}.
 \end{align}
In turn,  the optimal exit time in \eqref{l-s-problem-4} is given by
\begin{align} \label{OST-2} \hs{5pc}
\tau^{1,E}_*=\inf\ \{\ 0\leq s\leq T\m t:(t\p s,X^x_{s})\in\cD^{1,E}\ \}.
 \end{align}
Let us define the function $\cK^{1,E}$ as
\begin{align} \hs{3pc}
\label{L-2}
\cK^{1,E}(u,x,z)&=-\e^{-ru}\EE \big[H^{1,E}(u, X^x_u) I(X^x_u \le z)\big]\notag\\
&=-\e^{-ru}\int_{-\infty}^z H^{1,E}(u, \wt{x})\,p(\wt{x};u,x)\,d\wt{x}
 \end{align}
 for $u\ge 0$ and $x,z\in\R$, where  \begin{align}\hs{3pc}H^{1,E}(t,x)=((\mu+ r)x-\mu\theta)I(x <b^{1,L}(t))
 \end{align} for $(t,x)\in[0,T)\times\R$.

 We now state the main theorem of this section. We do not provide full proof since it is very similar to the one in Theorem 3.1 above and outline only important details.
\begin{theorem}\label{th:2}
The optimal entry boundary $b^{1,E}$ in \eqref{l-s-problem-4} can be characterized as the unique solution to a nonlinear integral equation
\begin{align}\label{th-2-2} \hs{5pc}
V^{1,L}(t,b^{1,E}(t))\m b^{1,E}(t)=\int_0^{T-t} \cK^{1,E}(u,b^{1,E}(t),b^{1,E}(t\p u))du
\end{align}
for $t\in[0,T]$ in the class of continuous increasing functions $t\mapsto b^{1,E}(t)$ with $b^{1,E}(T)=x^*$.
 The value function $V^{1,E}$ in \eqref{l-s-problem-4} can be represented as
\begin{align}\label{th-2-1} \hs{5pc}
V^{1,E}(t,x)=\int_0^{T-t}\cK^{1,E}(u,x,b^{1,E}(t\p u))du
\end{align}
for $t\in[0,T]$ and $x\in \R$.
\end{theorem}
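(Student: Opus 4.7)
The plan is to mirror the eight-step template of Theorem~3.1, adapted to the entry problem whose payoff $G^{1,E}(t,x)=V^{1,L}(t,x)\m x$ is now time-dependent. The key preliminary observation is that applying the local time--space formula to $\e^{-rs}G^{1,E}(t\p s,X^x_s)$ produces a compensator equal to $\e^{-rs}H^{1,E}(t\p s,X^x_s)$: on $\cC^{1,L}$ the PDE \eqref{PDE} gives $(V^{1,L}_t\p\L_X V^{1,L}\m rV^{1,L})=0$, so the drift reduces to $\L_X(-x)\m r(-x)=-H^{1,L}(x)=(\mu\p r)x\m\mu\theta$; on $\cD^{1,L}$ one has $G^{1,E}\equiv 0$; and the local-time term along $b^{1,L}$ vanishes thanks to the smooth fit \eqref{SF}. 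The resulting function $H^{1,E}$ is strictly negative on $\{x<x^*\}$, strictly positive on $\{x^*<x<b^{1,L}(t)\}$, and zero on $\{x\ge b^{1,L}(t)\}$, which already suggests that the entry region should have the form $\{x\le b^{1,E}(t)\}$ with $b^{1,E}(t)\le x^*$.

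I would then execute, in order: (i)~continuity of $V^{1,E}$ on $[0,T]\times\R$ by the argument of step~1 of Theorem~3.1, using continuity of $V^{1,L}$ in place of the time-independent payoff; (ii)~down-connectedness and non-emptiness of $\cD^{1,E}$, combining convexity of $x\mapsto V^{1,L}(t,x)$ with a coupling argument based on \eqref{Sol} and the sign analysis of $H^{1,E}$, yielding $\cD^{1,E}=\{x\le b^{1,E}(t)\}$; (iii)~monotonicity of $b^{1,E}$ in $t$, deduced from the monotonicity \eqref{Prop-4} of $t\mapsto V^{1,L}(t,x)$ combined with the shrinking horizon; (iv)~smooth fit $V^{1,E}_x(t,b^{1,E}(t)\pm)=V^{1,L}_x(t,b^{1,E}(t))\m 1$ via the two-sided inequality technique of step~4 of Theorem~3.1; and (v)~continuity of $b^{1,E}$ with $b^{1,E}(T-)=x^*$, using the local variational argument of \eqref{cont01}--\eqref{cont03} and the strict negativity of $H^{1,E}$ on $(-\infty,x^*)$. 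With the free-boundary problem thus in hand, applying the local time--space formula to $\e^{-rs}V^{1,E}(t\p s,X^x_s)$ converts the drift term into $\e^{-ru}H^{1,E}(t\p u,X^x_u)I(X^x_u\le b^{1,E}(t\p u))$ (the PDE annihilates it on $\cC^{1,E}$, while the identity $V^{1,E}=G^{1,E}$ produces precisely $H^{1,E}$ on $\cD^{1,E}$), and smooth fit kills the local-time term. Taking expectations at $s=T\m t$ and using $V^{1,E}(T,\cdot)=G^{1,E}(T,\cdot)=0$ yields \eqref{th-2-1}, and substituting $x=b^{1,E}(t)$ together with the instantaneous-stopping identity $V^{1,E}(t,b^{1,E}(t))=V^{1,L}(t,b^{1,E}(t))\m b^{1,E}(t)$ gives \eqref{th-2-2}. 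Uniqueness in the stated class follows from the Peskir-type scheme \cite{Peskir2005b} invoked at the end of Theorem~3.1.

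The principal obstacle I anticipate is step~(ii)--(iii): because the payoff $G^{1,E}$ inherits its $t$-dependence from $V^{1,L}$, the time-homogeneity shortcut used in Theorem~3.1 does not transfer verbatim, and down-connectedness together with monotonicity of $b^{1,E}$ must be extracted from the full list of properties \eqref{Prop-3}--\eqref{Prop-5} of $V^{1,L}$. A related technical subtlety concerns the region $\{x\ge b^{1,L}(t)\}$, where $G^{1,E}\equiv 0$ and $H^{1,E}\equiv 0$: there continuation must be shown to be strictly optimal solely through the diffusion eventually reaching the zone $\{x^*<x<b^{1,L}(t)\}$ where $H^{1,E}>0$, and this is precisely what forces $b^{1,E}(t)<x^*$ for every $t<T$.
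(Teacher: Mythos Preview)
Your proposal is correct and follows essentially the same approach as the paper's proof, which is itself deliberately abbreviated and refers back to the eight-step template of Theorem~\ref{th:1}. In fact your outline is more detailed than the paper's: you correctly identify the subtleties arising from the time-dependence of $G^{1,E}$ (right-connectedness of $\cD^{1,E}$ no longer follows from time-homogeneity alone, and must be recovered via the Lagrangian representation \eqref{Ito-2} together with the monotonicity of $b^{1,L}$) and the degenerate zone $\{x\ge b^{1,L}(t)\}$ where $H^{1,E}\equiv 0$, both of which the paper simply absorbs into the phrase ``similarly as in the previous section.''
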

\begin{proof}

1. We use the local time-space formula on curves (\cite{Peskir2005a}) and the smooth-fit property \eqref{SP} to obtain
\begin{align} \label{Ito-2} \hs{5pc}
 \EE \left[\e^{-r\tau}G^{1,E}(t\p\tau,X^{x}_\tau)\right]=\;G^{1,E}(t,x)+\EE \left[\int_0^\tau \e^{-rs}H^{1,E}(t\p s,X^{x}_s)ds\right]
 \end{align}
for $t\in[0,T)$, $x\in\R$, any stopping time $\tau$ of process $X$  and where
the function $H^{1,E}$ is defined as  $H^{1,E}(t,x):=(\L_X G^{1,E} \m rG^{1,E})(t,x)$ for $(t,x)\in[0,T)\times\R$ and equals
\begin{align} \label{H-2} \hs{5pc}
H^{1,E}(t,x)=((\mu\p r)x-\mu\theta)I(x <b^{1,L}(t))
\end{align}
for $(t,x)\in[0,T)\times\R$.

The function $H^{1,E}$ is strictly increasing and linear in $x$ on $(-\infty,b^{1,L}(t))$ for fixed $t$ and has unique root $x^*=\mu\theta/(\mu\p r)<b^{1,L}(t)$. Hence, it is not optimal to enter into the position when $X_t>x^*$ and as the integral term
on the right-hand side of \eqref{Ito-2} is non-negative.
The equation \eqref{Ito-2} also gives that the entry region is non-empty for all $t\in[0,T)$, as for large negative $x\downarrow -\infty$ the integrand $H^{1,E}$
is very negative and thus it is optimal to enter immediately.
\vs{6pt}

2. We can prove similarly as in the previous section that the entry region $\cD^{1,E}$ is right-connected and down-connected.
Hence there exists an optimal entry boundary $b^{1,E}:[0,T]\rightarrow \R$ (see Figure 1) such that
\begin{align}  \hs{5pc}
&\tau^{1,E}_*=\inf\ \{\ 0\leq s\leq T\m t:X^x_{s}\le b^{1,E}(t\p s) \ \}
 \end{align}
is optimal in \eqref{l-s-problem-4} and $-\infty<b^{1,E}(t)<x^*$ for $t\in[0,T)$. Moreover, $b^{1,E}$ is increasing on $[0,T)$ and is bounded from below.
\vs{6pt}

3.  Standard methods based on the strong Markov property and arguments from the previous section lead to the following free-boundary problem for the value function $V^{1,E}$ and the boundary $b^{1,E}$:
\begin{align} \label{PDE-2} \hs{5pc}
&V^{1,E}_t \p\L_X V^{1,E} \m rV^{1,E}=0 &\hs{-30pt}\text{in}\;  \cC^{1,E}\\
\label{IS-2}&V^{1,E}(t,b^{1,E}(t))=V^{1,L}(t,b^{1,E}(t))\m b^{1,E}(t) &\hs{-30pt}\text{for}\; t\in[0,T)\\
\label{SP-2}&V^{1,E}_x(t,b^{1,E}(t))=V^{1,L}_x(t,b^{1,E}(t))\m 1&\hs{-30pt}\text{for}\; t\in[0,T) \\
\label{FBP1-2}&V^{1,E}(t,x)>G^{1,E}(t,x) &\hs{-30pt}\text{in}\; \cC^{1,E}\\
\label{FBP2-2}&V^{1,E}(t,x)=G^{1,E}(t,x) &\hs{-30pt}\text{in}\; \cD^{1,E}
\end{align}
where the continuation set $\cC^{1,E}$ and the entry region $\cD^{1,E}$ are given by
\begin{align} \label{CC-2} \hs{5pc}
&\cC^{1,E}= \{\, (t,x)\in[0,T)\! \times\! \R:x>b^{1,E}(t)\, \} \\[3pt]
 \label{DD-2}&\cD^{1,E}= \{\, (t,x)\in[0,T)\! \times\! \R:x\le b^{1,E}(t)\, \}.
 \end{align}
The following properties of $V^{1,E}$ and $b^{1,E}$ hold:
\begin{align} \label{Prop-1-2} \hs{5pc}
&V^{1,E}\;\text{is continuous on}\; [0,T]\times\R\\
\label{Prop-2-2}&V^{1,E}\;\text{is}\; C^{1,2}\;\text{on}\; \cC^{1,E}\\
\label{Prop-3-2}&x\mapsto V^{1,E}(t,x)\;\text{is convex on $\R$ for each $t\in[0,T]$}\\
\label{Prop-4-2}&t\mapsto V^{1,E}(t,x)\;\text{is decreasing on $[0,T]$ for each $x\in \R$}\\
\label{Prop-5-2}&t\mapsto b^{1,E}(t)\;\text{is increasing and continuous on $[0,T]$ with}\; b^{1,E}(T-)=x^*.
\end{align}

4. We then verify the conditions of local time-space formula and apply it for $\e^{-rs}V^{1,E}(t\p s,X^x_s)$ to obtain representation \eqref{th-2-1}. The integral equation  \eqref{th-2-1} is
derived by inserting $x=b^{1,E}(t)$ into \eqref{th-2-1}.

\end{proof}
\clearpage
\begin{figure}[t]
\begin{center}
\includegraphics[scale=0.7]{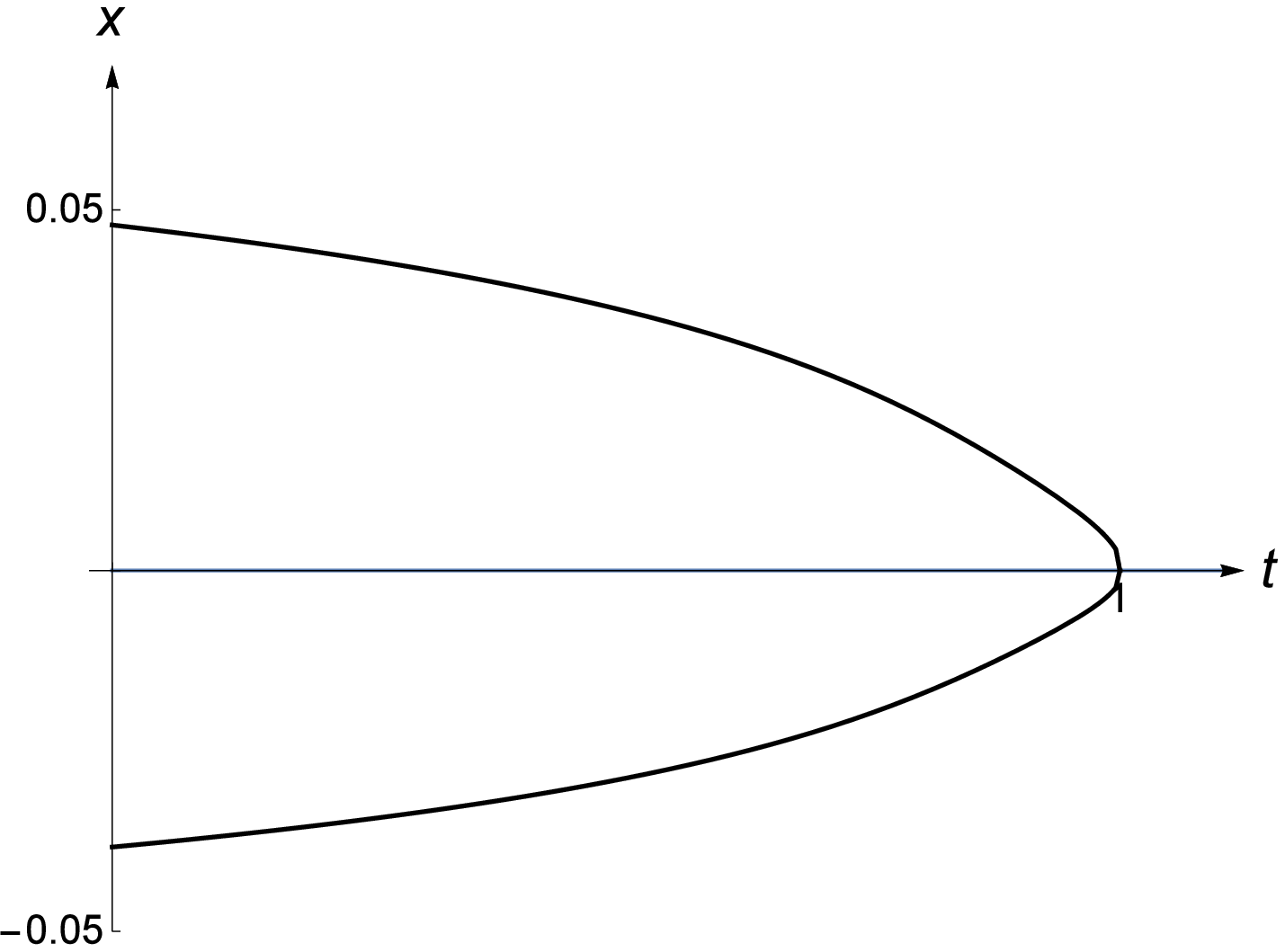}
\end{center}

{\par \leftskip=1.6cm \rightskip=1.6cm \small \ni \vs{-10pt}

\textbf{Figure 1.} The optimal entry boundary
 (lower) and the optimal exit boundary (upper) for the long-short strategy problem \eqref{l-s-problem-1} computed as the solutions to integral equations \eqref{th-2-2} and \eqref{th-1-2}, respectively.
The parameters are:  $T=1$ year, $r=0.01$, $\theta=0$, $\mu=16$, $\zeta=0.16.$  A time discretization with  $500$ steps for the interval $[0,T]$ is used.

\par} \vs{10pt}

\end{figure}

 \section{Optimal short-long strategy}\label{sl}
In this section, we consider the short-long strategy: short  the spread 
to open, long the spread to close the position. This problem is analogous to the one in Section 3, and thus we only  state   the optimal stopping problems and main results, and omit the proofs.

At      time $t\in [0,T)$ with  the current spread value $x\in \R$, the trader solves the  optimal double stopping problem 
 \begin{equation} \label{s-l-problem-1} \hs{7pc}
V^2(t,x)=\sup \limits_{0\le\zeta\le\tau\le T-t}\EE \left[\e^{-r\zeta} X^x_\zeta\m \e^{-r\tau} X^x_\tau\right]
 \end{equation}
where $X^x$ represents that the process $X$ starts from $X^x_0=x$, $r>0$ is the interest rate, and
 the supremum is taken over all pairs of $\cF^{X}$- stopping times $(\tau,\zeta)$ such that $\zeta\le\tau\le T-t$.
As in the previous section, $\tau$ is time for long position and $\zeta$ is the strategy for short position.

\begin{remark}  If $\theta=0$,  then $X$ and $-X$ have the same law. As such,   the problems  \eqref{l-s-problem-1} and  \eqref{s-l-problem-1} are symmetric, i.e.,
$V^{2}=-V^{1}$.
\end{remark}

We reduce \eqref{s-l-problem-1} into the  two single optimal stopping problems:
 \begin{equation} \label{s-l-problem-2} \hs{7pc}
V^{2,L}(t,x)=\inf \limits_{0\le\tau\le T-t}\EE \left[\e^{-r\tau} X^x_\tau\right]
 \end{equation}
which is the optimal exit problem under this strategy. After   solving \eqref{s-l-problem-2} we will turn to the entry problem, i.e. optimal timing to enter into the long position
  \begin{equation} \label{s-l-problem-3} \hs{4pc}
V^{2}(t,x)=V^{2,E}(t,x)=\sup \limits_{0\le\zeta\le T-t}\EE \left[\e^{-r\zeta}(X^x_\zeta \m V^{2,L}(t\p \zeta,X^x_\zeta))\right]
 \end{equation}
as at time $t\p\zeta$ we receive $X^x_\zeta$ and get the short position with the value $-V^{2,L}(t\p \zeta,X^x_\zeta)$.
\vs{2pt}

 As in the previous section, we   reduce \eqref{s-l-problem-2} to the free-boundary problem for the value function $V^{2,L}$ and optimal exit boundary $b^{2,L}$:
\begin{align}
\label{PDE-3} \hs{5pc}&V^{2,L}_t \p\L_X V^{2,L} \m rV^{2,L}=0 &\hs{-30pt}\text{in}\; \cC^{2,L}\\
\label{IS-3}&V^{2,L}(t,b^{2,L}(t))=b^{2,L}(t) &\hs{-30pt}\text{for}\; t\in[0,T)\\
\label{SP-3}&V^{2,L}_x (t,b^{2,L}(t))=1 &\hs{-30pt}\text{for}\; t\in[0,T) \\
\label{FBP1-3}&V^{2,L}(t,x)<x &\hs{-30pt}\text{in}\; \cC^{2,L}\\
\label{FBP2-3}&V^{2,L}(t,x)=x  &\hs{-30pt}\text{in}\; \cD^{2,L}
\end{align}
where the continuation set $\cC^{2,L}$ and the exit region $\cD^{2,L}$ are given by
\begin{align} \label{CC-3} \hs{5pc}
&\cC^{2,L}= \{\, (t,x)\in[0,T)\! \times\! \R:x>b^{2,L}(t)\, \} \\[3pt]
 \label{DD-3}&\cD^{2,L}= \{\, (t,x)\in[0,T)\! \times\! \R:x\le b^{2,L}(t)\, \}.
 \end{align}
The value function $V^{2,L}$ and optimal boundary  $b^{2,L}$ admit the following properties:
\begin{align} \label{Prop-1-3} \hs{5pc}
&V^{2,L}\;\text{is continuous on}\; [0,T]\times\R\\
\label{Prop-2-3}&V^{2,L}\;\text{is}\; C^{1,2}\;\text{on}\; \cC^{2,L}\\
\label{Prop-3-3}&x\mapsto V^{2,L}(t,x)\;\text{is increasing and concave on $\R$ for each $t\in[0,T]$}\\
\label{Prop-4-3}&t\mapsto V^{2,L}(t,x)\;\text{is increasing on $[0,T]$ for each $x\in \R$}\\
\label{Prop-5-3}&t\mapsto b^{2,L}(t)\;\text{is increasing and continuous on $[0,T]$ with}\; b^{2,L}(T-)=x^*.
\end{align}

 The main theorem is stated as follows.
\begin{theorem}\label{th:3}
The value function $V^{2,L}$ in \eqref{s-l-problem-2} has the following representation
\begin{align}\label{th-3-1} \hs{5pc}
V^{2,L}(t,x)=\e^{-r(T-t)} m(T-t,x) +\int_0^{T-t}\cK^{2,L}(u,x,b^{2,L}(t\p u))du
\end{align}
for $t\in[0,T]$ and $x\in \R$. The optimal exit boundary $b^{2,L}$ in \eqref{s-l-problem-2} can be characterized as the unique solution to a nonlinear integral equation
\begin{align}\label{th-3-2} \hs{5pc}
b^{2,L}(t)=\e^{-r(T-t)} m(T-t,b^{2,L}(t)) +\int_0^{T-t} \cK^{2,L}(u,b^{2,L}(t),b^{2,L}(t\p u))du
\end{align}
for $t\in[0,T]$ in the class of continuous increasing functions $t\mapsto b^{2,L}(t)$ with $b^{2,L}(T)=x^*$.
and where the function $\cK^{2,L}$ is defined as
\begin{align}\label{K} \hs{5pc}
\cK^{2,L}(u,x,z)=-\e^{-ru}\EE \big[H^{1,L}(X^x_u) I(X^x_u \le z)\big]
 \end{align}
 for $u\ge 0$, $x,z\in\R$ and $H^{1,L}$ is given in \eqref{H-1}.
 \end{theorem}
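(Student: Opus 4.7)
My plan is to mirror the proof of Theorem \ref{th:1} with the signs of the relevant inequalities reversed to accommodate the fact that \eqref{s-l-problem-2} is an infimum rather than a supremum. As a starting point, Ito's formula applied to $\e^{-rs}X^x_s$ gives
\begin{align*}\hs{5pc}
\EE\!\left[\e^{-r\tau}X^x_\tau\right]=x+\EE\!\left[\int_0^\tau \e^{-rs}H^{1,L}(X^x_s)\,ds\right]
\end{align*}
for any stopping time $\tau$, with exactly the same $H^{1,L}$ and root $x^*=\mu\theta/(\mu\p r)$ as in the long-short case. Because the agent now minimizes, the region $\{H^{1,L}>0\}=\{x<x^*\}$ is unfavorable to continue, so by localizing in a time-space ball we obtain that $\cD^{2,L}$ is non-empty and contained in $\{x\le x^*\}$; conversely, $H^{1,L}<0$ on $(x^*,\infty)$ implies that it is never optimal to exit above $x^*$.

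Next I would establish the regularity properties \eqref{Prop-1-3}--\eqref{Prop-5-3}. Joint continuity of $V^{2,L}$ follows exactly as in Step 1 of Theorem \ref{th:1}, using that $x\mapsto V^{2,L}(t,x)$ is concave as an infimum of affine maps, combined with an $L^1$-truncation of a near-optimal stopping time for the earlier date. Right-connectedness of $\cD^{2,L}$ comes from $t\mapsto V^{2,L}(t,x)$ being non-decreasing (sign reversed relative to Theorem \ref{th:1}) by time-homogeneity and the shrinking set of admissible stopping times. Down-connectedness follows from the pathwise coupling used in Step 3(ii), since $H^{1,L}$ is decreasing and $X^x\ge X^y$ whenever $x\ge y$. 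Together these deliver an increasing boundary $b^{2,L}$ with $-\infty<b^{2,L}(t)<x^*$ on $[0,T)$. Smooth fit at $b^{2,L}$ is then proved in the symmetric form of Step 4: the bound $V^{2,L}_x(t,b^{2,L}(t)+)\le 1$ comes from $V^{2,L}(t,x\p\eps)\le x\p\eps$ since $\cD^{2,L}$ lies below the boundary, while the reverse bound uses the explicit coupling $X^{x\p\eps}_s-X^x_s=\eps\e^{-\mu s}$ from \eqref{Sol} applied to a near-optimal $\tau_\eps$ for $V^{2,L}(t,x\p\eps)$, yielding a difference quotient bounded below by $\EE[\e^{-(r\p\mu)\tau_\eps}]\to 1$ as $\eps\downarrow 0$ by the regularity of $X$ for the continuous curve $b^{2,L}$. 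Continuity of $b^{2,L}$ on $[0,T]$ and $b^{2,L}(T-)=x^*$ would then follow via the integration-by-parts contradiction of \cite{DeA1} as in Step 5, with the roles of the exit and continuation sides interchanged; the key sign is that $H^{1,L}$ is \emph{strictly positive} on any putative discontinuity interval $(b^{2,L}(t_0),b^{2,L}(t_0-))\subset(-\infty,x^*)$.

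Once the free-boundary problem \eqref{PDE-3}--\eqref{FBP2-3} and the listed regularity are in place, the representation \eqref{th-3-1} follows from the local time-space formula of \cite{Peskir2005a} applied to $\e^{-rs}V^{2,L}(t\p s,X^x_s)$. The five conditions of Step 7 of Theorem \ref{th:1} carry over unchanged: monotonicity of $b^{2,L}$ gives bounded variation, concavity of $x\mapsto V^{2,L}(t,x)$ replaces convexity without affecting validity of the formula, and the local-time term vanishes by \eqref{SP-3}. Substituting \eqref{PDE-3} and \eqref{FBP2-3}, taking expectations, using the terminal condition $V^{2,L}(T,x)=x$, and rearranging gives \eqref{th-3-1}, after which evaluating at $x=b^{2,L}(t)$ and invoking \eqref{IS-3} yields \eqref{th-3-2}.

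The main obstacle I expect is the uniqueness statement in the class of continuous increasing functions with terminal value $x^*$. I would follow the four-step scheme of \cite{Peskir2005b}: given a candidate $c$, define $U^c(t,x)$ by the right-hand side of \eqref{th-3-1} with $b^{2,L}$ replaced by $c$, and show successively that $U^c\equiv x$ on $\{x\le c(t)\}$ (using that $c$ solves \eqref{th-3-2} together with the Markov property of $X$), that $U^c\ge V^{2,L}$ on $[0,T]\times\R$ by stopping at the first entry time into $\{x\le c\}$ and applying the local time-space formula to $\e^{-rs}U^c(t\p s,X^x_s)$ together with the sign of $H^{1,L}$ outside $\{x\le c\}$, and finally $U^c\le V^{2,L}$ by stopping instead at the first entry into $\{x\le b^{2,L}\}$. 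Together with the restriction to increasing continuous curves sharing the boundary value $x^*$, these comparisons force $c\equiv b^{2,L}$, at which point \eqref{th-3-2} inherits its unique solution.
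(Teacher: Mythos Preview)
Your proposal is correct and follows essentially the same approach as the paper: the paper in fact omits the proof of Theorem~\ref{th:3} entirely, stating only that the problem is analogous to that of Section~3 and that the arguments of Theorem~\ref{th:1} carry over with the obvious sign reversals. You have correctly identified all of these reversals (concavity in place of convexity, $t\mapsto V^{2,L}(t,x)$ non-decreasing, down-connectedness of $\cD^{2,L}$, increasing boundary, strict positivity of $H^{1,L}$ on a putative discontinuity interval below $x^*$), and your outline of the smooth-fit and uniqueness steps is sound.
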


Applying   the results for exit problem, we now analyze   the  following free-boundary problem for the optimal  entry value function $V^{2,E}$ in  \eqref{s-l-problem-3}, and the associated boundary $b^{2,E}$:
\begin{align} \label{PDE-4} \hs{6pc}
&V^{2,E}_t \p\L_X V^{2,E} \m rV^{2,E}=0 &\hs{-30pt}\text{in}\;  \cC^{2,E}\\
\label{IS-4}&V^{2,E}(t,b^{2,E}(t))=b^{2,E}(t)\m V^{2,L}(t,b^{2,E}(t)) &\hs{-30pt}\text{for}\; t\in[0,T)\\
\label{SP-4}&V^{2,E}_x (t,b^{2,E}(t))=1\m V^{2,L}_x(t,b^{2,E}(t))&\hs{-30pt}\text{for}\; t\in[0,T) \\
\label{FBP1-4}&V^{2,E}(t,x)>x\m V^{2,L}(t,x) &\hs{-30pt}\text{in}\; \cC^{2,E}\\
\label{FBP2-4}&V^{2,E}(t,x)=x\m V^{2,L}(t,x) &\hs{-30pt}\text{in}\; \cD^{2,E}
\end{align}
where the continuation  and entry regions, respectively, are given by
\begin{align} \label{C-4} \hs{5pc}
&\cC^{2,E}= \{\, (t,x)\in[0,T)\! \times\! \R:x<b^{2,E}(t)\, \} \\[3pt]
 \label{D-4}&\cD^{2,E}= \{\, (t,x)\in[0,T)\! \times\! \R:x\ge b^{2,E}(t)\, \}.
 \end{align}
The following properties of $V^{2,E}$ and $b^{2,E}$ hold:
\begin{align} \label{Prop-1-4} \hs{6pc}
&V^{2,E}\;\text{is continuous on}\; [0,T]\times\R\\
\label{Prop-2-4}&V^{2,E}\;\text{is}\; C^{1,2}\;\text{on}\; \cC^{2,E}\\
\label{Prop-3-4}&x\mapsto V^{2,E}(t,x)\;\text{is convex on $\R$ for each $t\in[0,T]$}\\
\label{Prop-4-4}&t\mapsto V^{2,E}(t,x)\;\text{is decreasing on $[0,T]$ for each $x\in \R$}\\
\label{Prop-5-4}&t\mapsto b^{2,E}(t)\;\text{is decreasing and continuous on $[0,T]$ with}\; b^{2,E}(T-)=x^*.
\end{align}

To prepare the following result, we define the function
\begin{align} \hs{6pc}
\label{K-2}
\cK^{2,E}(u,x,z)=-\e^{-ru}\EE \big[H^{1,L}(X^x_u) I(X^x_u \ge z)\big]
 \end{align}
 for $t, u\ge 0$, $x,z\in\R$, with   $H^{1,L}$   defined in \eqref{H-1}.

\begin{theorem}\label{th:4}
The value function $V^{2,L}$ in \eqref{s-l-problem-3}  admits the integral representation:
\begin{align}\label{th-4-1} \hs{6pc}
V^{2,E}(t,x)=\int_0^{T-t}\cK^{2,E}(u,x,b^{2,E}(t\p u))du
\end{align}
for $t\in[0,T]$ and $x\in \R$. The optimal entry boundary $b^{2,E}$ in \eqref{s-l-problem-3} can be characterized as the unique solution to a nonlinear integral equation
\begin{align}\label{th-4-2} \hs{6pc}
b^{2,E}(t)\m V^{2,L}(t,b^{2,E}(t))=\int_0^{T-t} \cK^{2,E}(u,b^{2,E}(t),b^{2,E}(t\p u))du
\end{align}
for $t\in[0,T]$ in the class of continuous decreasing functions $t\mapsto b^{2,E}(t)$ with $b^{2,E}(T)=x^*$.
\end{theorem}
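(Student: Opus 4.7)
The plan is to mirror the proof of Theorem 3.2 with the payoff function $G^{2,E}(t,x):=x\m V^{2,L}(t,x)$ playing the role of $G^{1,E}$. The crucial structural observation, which makes the nonlinear payoff tractable, is that the free-boundary problem solved in Theorem 4.3 implies $b^{2,L}(t)\le x^* \le b^{2,E}(t)$ for $t\in[0,T)$, so the entry region $\cD^{2,E}=\{x\ge b^{2,E}(t)\}$ sits strictly inside the continuation region $\cC^{2,L}$ for the exit problem. On $\cD^{2,E}$, $V^{2,L}$ is therefore $C^{1,2}$ and satisfies $V^{2,L}_t\p \L_X V^{2,L}\m rV^{2,L}=0$, which reduces the generator of $G^{2,E}$ to
\begin{align*}
(\partial_t\p\L_X\m r)G^{2,E}(t,x) &= \big(\L_X x\m rx\big)-\big(V^{2,L}_t\p \L_X V^{2,L}\m rV^{2,L}\big)(t,x) \\
&= \mu\theta\m(\mu\p r)x = H^{1,L}(x),
\end{align*}
so the integrand appearing in $\cK^{2,E}$ is exactly the generator of the payoff on the entry region.

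First I would establish, by arguments parallel to paragraphs 1--3 of Theorem 3.1, the continuity of $V^{2,E}$; the non-emptiness of $\cD^{2,E}$ for each $t\in[0,T)$ using the Ito expansion analogous to \eqref{Ito-2} and the fact that $H^{1,L}(x)\to-\infty$ as $x\to\infty$ forces immediate entry for large $x$; the right-connectedness of $\cD^{2,E}$ in $t$ from the monotonicity \eqref{Prop-4-4}; and the up-connectedness of $\cD^{2,E}$ in $x$ from a coupling $X^x_\cdot \ge X^y_\cdot$ combined with the monotonicity of $H^{1,L}$ and of $V^{2,L}_x$. These yield the existence of the increasing, bounded boundary $b^{2,E}$ and the optimality of $\tau_*^{2,E}=\inf\{0\le s\le T\m t:X^x_s\ge b^{2,E}(t\p s)\}$. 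The smooth-fit relation \eqref{SP-4} is proved by the left/right difference-quotient method of Theorem 3.1, using the regularity of the OU diffusion and the fact that $V^{2,L}_x$ is continuous in a neighborhood of the boundary.

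Next I would verify that $V^{2,E}$ meets the hypotheses of the local time-space formula of \cite{Peskir2005a}: $C^{1,2}$ off $b^{2,E}$ (inheriting smoothness of $V^{2,L}$ on $\cD^{2,E}\subset\cC^{2,L}$), local boundedness of the generator, convexity \eqref{Prop-3-4}, and continuity of $t\mapsto V^{2,E}_x(t,b^{2,E}(t)\pm)$ from \eqref{SP-4}. Applying it to $\e^{-rs}V^{2,E}(t\p s,X^x_s)$ and invoking \eqref{PDE-4}--\eqref{FBP2-4}, the smooth-fit kills the local-time term, and the Lebesgue integral reduces to $\int_0^s \e^{-ru}H^{1,L}(X^x_u)\,I(X^x_u\ge b^{2,E}(t\p u))\,du$. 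Setting $s=T\m t$, taking $\EE$ via optional sampling of the martingale part, and using the terminal condition $V^{2,E}(T,x)=0$ delivers \eqref{th-4-1}; inserting $x=b^{2,E}(t)$ and applying \eqref{IS-4} then gives \eqref{th-4-2}.

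The main obstacle, as in the companion theorems, is the uniqueness assertion: any other continuous decreasing $c:[0,T]\to\R$ with $c(T)=x^*$ satisfying \eqref{th-4-2} must coincide with $b^{2,E}$. I would follow the four-step scheme of \cite{Peskir2005b}: define $U^c(t,x)$ via the right-hand side of \eqref{th-4-1} with $b^{2,E}$ replaced by $c$, check that $U^c$ equals $G^{2,E}$ on $\{x\ge c(t)\}$, apply the local time-space formula to $\e^{-rs}U^c(t\p s,X^x_s)$ in two complementary ways, and derive contradictions in both the regions $\{c<b^{2,E}\}$ and $\{c>b^{2,E}\}$ by exploiting the sign of $H^{1,L}$ relative to $x^*$. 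The only subtlety beyond a verbatim transcription of Peskir's scheme is verifying that any candidate $c$ lying in the allowed class still satisfies $c(t)\ge x^*\ge b^{2,L}(t)$, so that the generator expansion $H^{1,L}=(\partial_t\p\L_X\m r)G^{2,E}$ remains valid on $\{x\ge c(t)\}$; this follows directly from $c(T)=x^*$ together with monotonicity of $c$.
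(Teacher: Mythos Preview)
Your approach is correct and is exactly what the paper intends: it omits the proof of Theorem~\ref{th:4} entirely, stating that the short-long problems are analogous to Section~3, so mirroring the argument of Theorem~\ref{th:2} (with the key reduction $(\partial_t\p\L_X\m r)G^{2,E}=H^{1,L}$ on $\cD^{2,E}\subset\cC^{2,L}$) is precisely the intended route. One slip to fix: right-connectedness in $t$ together with up-connectedness in $x$ of $\cD^{2,E}=\{x\ge b^{2,E}(t)\}$ forces $b^{2,E}$ to be \emph{decreasing}, not increasing, consistent with \eqref{Prop-5-4} and the uniqueness class in the theorem statement; your own use of $c(T)=x^*$ plus monotonicity to get $c(t)\ge x^*$ already relies on this.
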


\section{Chooser strategy}\label{o}
 In this section, the trader in  the spread trading problem  can choose whether to long or short his position first. Thus she/he is not pre-committed to the strategies in Sections~\ref{ls} and \ref{sl}, and clearly this flexibility increases his overall expected profit from the trading.
The trading problem again can be formulated
 as the double optimal stopping one
 \begin{equation} \label{0-problem-1} \hs{7pc}
V^0(t,x)=\sup \limits_{0\le\tau,\zeta\le T-t}\EE\left[\e^{-r\zeta} X^x_\zeta\m \e^{-r\tau} X^x_\tau\right]
 \end{equation}
for $(t,x)\in[0,T)\times \R$ where $X^x$ represents that the process $X$ starts from $X^x_0=x$, $r>0$ is the interest rate, and
 the supremum is taken over all pairs of $\cF^{X}$- stopping times $(\tau,\zeta)$ such that $\tau,\zeta\le T-t$.
As before, $\tau$ is time for long position and $\zeta$ is time for short position.
The main difference of \eqref{0-problem-1} compare to both long-short and short-long strategies is that there is no order and constraint between $\tau$ and $\zeta$. We do not need to consider sequentially exit and entry problems, but just solve optimal problems for both long and short positions independently. Therefore we split the trading problem into the two separate problems
 \begin{align} \label{0-problem-2} \hs{7pc}
&V^{1,L}(t,x)=\sup \limits_{0\le\zeta\le T-t}\EE \left[\e^{-r\zeta} X^x_\zeta\right]\\
\label{0-problem-3}
&V^{2,L}(t,x)=\inf \limits_{0\le\tau\le T-t}\EE \left[\e^{-r\tau} X^x_\tau\right]
 \end{align}
and we have  
 \begin{equation} \label{0-problem-4} \hs{7pc}
V^0(t,x)=V^{1,L}(t,x)-V^{2,L}(t,x)
 \end{equation}
for $(t,x)\in[0,T)\times \R$ and where $V^{1,L}$ and $V^{2,L}$ are given in \eqref{th-1-1}-\eqref{th-3-1}.

Both problems \eqref{0-problem-2}-\eqref{0-problem-3} have been solved already in previous sections. The optimal entry time in \eqref{0-problem-1} is given by $\zeta_*^{1,L}\wedge\tau_*^{2,L}$ and the exit time is $\zeta_*^{1,L}\vee\tau_*^{2,L}$, where $\zeta_*^{1,L}$ and $\tau_*^{2,L}$ have been characterized   as well.  
\vs{6pt}


In Figure 2, we illustrate the two  optimal boundaries representing the long and short entering positions under the chooser strategy. We compare it to the optimal entering thresholds in the perpetual version of the problem (see Chapter 14 of \cite{HFTbook}). Intuitively, with an infinite horizon ahead, the trader can afford to wait longer and enter the market when the spread is wider in either direction. This is confirmed in Figure 2 as the optimal   boundary to long (resp. short) is above (resp. below) the optimal thresholds from the perpetual case. In other words, the continuation region, in which the trader waits to enter the market, is larger in the perptual case than in the current finite-horizon problem.

\begin{figure}[t]
\begin{center}
\includegraphics[scale=0.95]{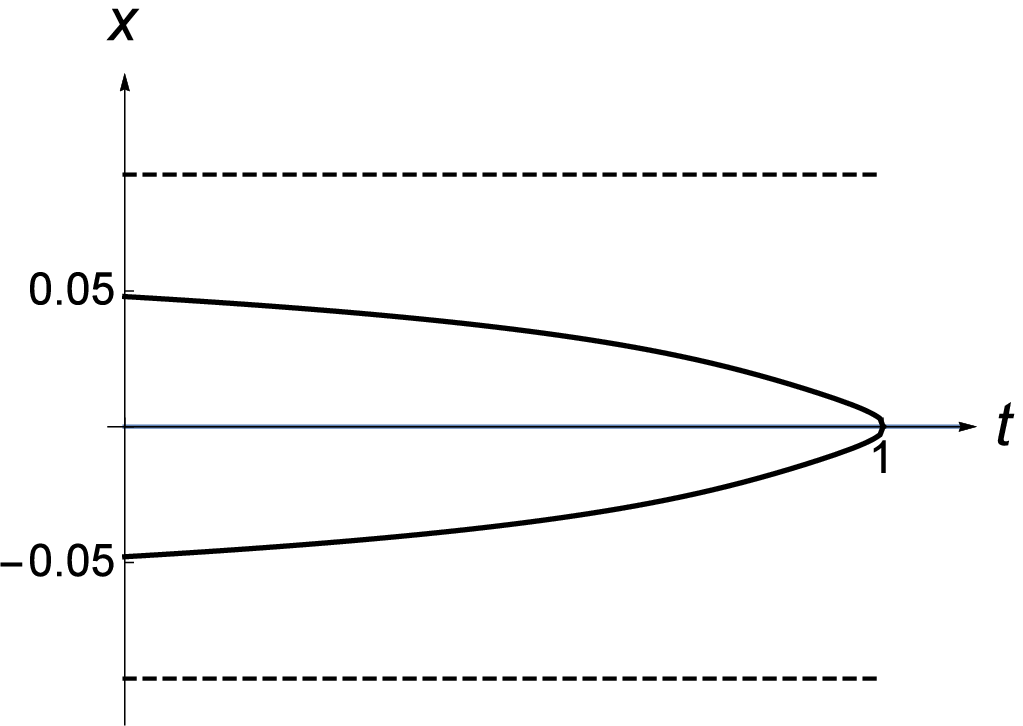}
\end{center}

{\par \leftskip=1.6cm \rightskip=1.6cm \small \ni \vs{-10pt}

\textbf{Figure 2.} The   optimal boundaries
$(b^{1,L},b^{2,L})$ (solid line) for the chooser strategy problem \eqref{0-problem-1} computed as the solution to integral equations \eqref{th-1-2} and \eqref{th-3-2}. Dashed lines represent optimal thresholds for the perpetual case.
The parameter set is $T=1$ year, $r=0.01$, $\theta=0$, $\mu=16$, $\zeta=0.16$. We used $N=500$ steps for the time discretization of interval $[0,T]$.

\par} \vs{10pt}

\end{figure}

 Here we reformulate the problem \eqref{0-problem-1} sequentially as for the long-short and short-long strategies. We already know the solution to the problem from previous
paragraph, but would like to show that the solution satisfies the free-boundary problem for the entry problem.
Once the trader enters into the position, she/he solves one of the optimal liquidation problems and both of them were already solved Sections 3.1 and 4.
Therefore we only need to study the optimal entry problem and this can be formulated as follows
 \begin{equation} \label{0-problem-5} \hs{5pc}
V^{0,E}(t,x)=\sup \limits_{0\le\tau\le T-t}\EE \left[\e^{-r\tau}G(t\p\tau,X^x_\tau)\right]
 \end{equation}
where the payoff function $G$ reads
 \begin{equation} \label{0-payoff} \hs{5pc}
G(t,x)=\max(V^{1,L}(t,x)\m x,x\m V^{2,L}(t,x))
 \end{equation}
for $t\in[0,T)$ and $x\in \R$. The payoff function $G$ shows that at entry time the trader maximizes his value and chooses the best option, i.e. whether to go long or short the spread.
Below we show that $V^{0,E}$ is the same as $V^0$ from \eqref{0-problem-4}.
   \vs{2pt}

 It can be seen that $V^{1,L}(t,x)- x=0$ for $x\ge b^{1,L}(t)$ and $x-V^{2,L}(t,x)=0$ for $x\le b^{2,L}(t)$.
 Also since $V^{1,L}$ and $V^{2,L}$ are convex and concave, respectively, we have   $V^{1,L}_x \le 1$ and $V^{2,L}_x \le1$. Hence
 the function $V^{1,L}(t,x)\m x$ is decreasing for $x< b^{1,L}(t)$ and $x\m V^{2,L}(t,x)$ is increasing $x> b^{2,L}(t)$,
 and we can conclude that there exists threshold $m(t)$ for fixed $t\in[0,T)$ such that
 \begin{equation} \label{0-payoff-2} \hs{5pc}
G(t,x)=(V^{1,L}(t,x)\m x) I(x\le m(t)) + (x\m V^{2,L}(t,x))I(x> m(t))
 \end{equation}
for $t\in[0,T)$ and $x\in \R$.
 Clearly, the function $G$ is convex in $x$ for fixed $t\in[0,T)$.
 \vs{2pt}

Given that we already know the solution to the problem  \eqref{0-problem-1}, we will use "guess-verify" method for the finite horizon optimal entry problem \eqref{0-problem-1} unlike in Sections 3.1 and 4 where the optimal boundaries were constructed directly (as solutions to the integral equations).  Let us take the pair of optimal exit strategies $(b^{1,L},b^{2,L})$ as the candidate for the optimal entry boundaries such that
the entry time is given by
\begin{align} \label{0-OST} \hs{5pc}
\tau_{0,E}=\inf\ \{\ 0\leq s\leq T\m t:X^x_{s}\le b^{2,L}(t\p s)\;\text{or}\;X^x_{s}\ge b^{1,L}(t\p s) \}
 \end{align}
 and define
  \begin{equation} \label{0-value} \hs{5pc}
\wh{V}^{0,E}(t,x)=V^{1,L}(t,x)-V^{2,L}(t,x)
 \end{equation}
for $t\in[0,T)$ and $x\in \R$ as the candidate value function for $V^{0,E}$.
\vs{2pt}

Using established properties \eqref{PDE}-\eqref{FBP2} and \eqref{PDE-3}-\eqref{FBP2-3} of the value functions ${V}^{1,L}$ and $V^{2,L}$, and boundaries $b^{1,L}$ and $b^{2,L}$,
we can verify that $\wh{V}^{0,E}$ and $(b^{1,L},b^{2,L})$ solve the following free-boundary problem
\begin{align} \label{mix-PDE} \hs{5pc}
&\wh{V}^{0,E}_t \p\L_X \wh{V}^{0,E} \m r\wh{V}^{0,E} =0 &\hs{-30pt}\text{in}\;  \cC^{0,E}\\
\label{mix-IS1}&\wh{V}^{0,E} (t,b^{1,L}(t))=G(t,b^{1,L}(t)) &\hs{-30pt}\text{for}\; t\in[0,T)\\
\label{mix-IS2}&\wh{V}^{0,E} (t,b^{2,L}(t))=G(t,b^{2,L}(t)) &\hs{-30pt}\text{for}\; t\in[0,T)\\
\label{mix-SP1}&\wh{V}^{0,E}_x (t,b^{1,L}(t))=G_x(t,b^{1,L}(t)) &\hs{-30pt}\text{for}\; t\in[0,T) \\
\label{mix-SP2}&\wh{V}^{0,E}_x (t,b^{2,L}(t))=G_x(t,b^{2,L}(t)) &\hs{-30pt}\text{for}\; t\in[0,T) \\
\label{mix-FBP1}&\wh{V}^{0,E}(t,x)>G(t,x) &\hs{-30pt}\text{in}\; \cC^{0,E}\\
\label{mix-FBP2}&\wh{V}^{0,E}(t,x)=G(t,x) &\hs{-30pt}\text{in}\; \cD^{0,E}
\end{align}
where the continuation set $\cC^{0,E}$ and the entry set $\cD^{0,E}$ are given by
\begin{align} \label{mix-C-1} \hs{5pc}
&\cC^{0,E}= \{\, (t,x)\in[0,T)\! \times\! \R:b^{2,L}(t)<x<b^{1,L}(t)\, \} \\[3pt]
 \label{mix-D-1}&\cD^{0,E}= \{\, (t,x)\in[0,T)\! \times\!\R:x\le b^{2,L}(t)\,\text{or}\,x\ge b^{1,L}(t)\, \}.
 \end{align}
 Let us show, for example, that \eqref{mix-IS1} holds indeed. This condition is equivalent to $V^{1,L}(t,b^{1,L}(t))-V^{2,L}(t,b^{1,L}(t))=b^{1,L}(t)- V^{2,L}(t,b^{1,L}(t))$ as
 $b^{1,L}(t)>m(t)$. The latter is true as $V^{1,L}(t,b^{1,L}(t))=b^{1,L}(t)$ due to \eqref{IS}. The conditions \eqref{mix-IS1}-\eqref{mix-SP2} can be shown in similar way.
\vs{2pt}

Finally, standard verification arguments indicate  that  $\wh{V}^{0,E}$ and $(b^{1,L},b^{2,L})$ are indeed the    value function  and optimal  boundaries, respectively.
 In Figure 3, we compare the value function of the chooser strategy over a finite horizon  ($T=1$ year) to  the value of the  perpetual counterpart. As we can see,  the difference is quite significant as a longer horizon allows the trader to wait longer to capture a wider spread. This also   shows the practical importance of studying     the optimal spread problem over the finite horizon. 

\begin{figure}[t]
\begin{center}
\includegraphics[scale=0.95]{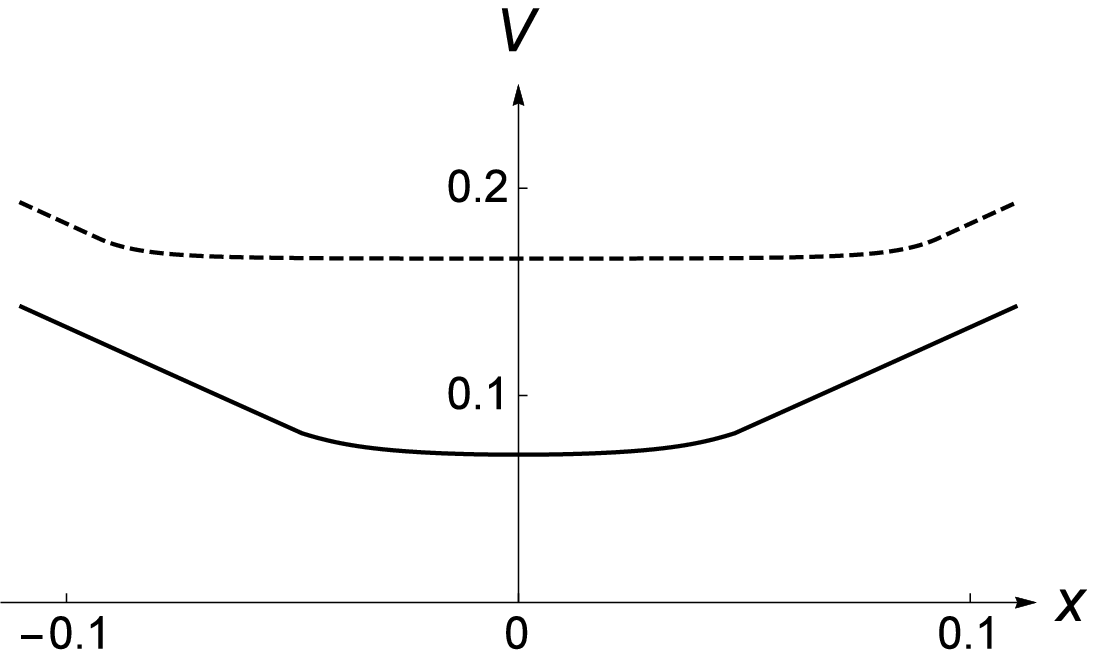}
\end{center}

{\par \leftskip=1.6cm \rightskip=1.6cm \small \ni \vs{-10pt}

\textbf{Figure 3.} The value function $V^{0,E}$ (solid line) of the chooser strategy \eqref{0-problem-1} over a finite horizon lies below  the value function for the perpetual case (dashed line). The parameters are: $T=1$ year, $r=0.01$, $\theta=0$, $\mu=16$, $\zeta=0.16.$

\par} \vs{10pt}

\end{figure}

 
  \begin{remark}
 We note that the analytical  results  above can be extended to other mean-reverting  models of   the form 
 \begin{align}\label{extension-OU}\hs{5pc}
dX_t=\mu(\theta\m X_t)dt+\sigma(X_t) dB_t, \quad X_0=x,
\end{align}
where $\sigma(x)$ is some smooth function of $X$, but not necessarily a constant. The corresponding integral equations and value function representations   will be of the same form as those under the OU model derived   above. Indeed, the  linear payoffs considered herein   render the diffusion coefficient $\sigma(x)$   irrelevant when we apply Ito's calculus. However, for numerical analysis and computation, it is crucial to know marginal distributions of $X$. For example, $X_t$ is Gaussian when $\sigma(x)=\sigma$, and $X_t$ is non-central chi-squared
when $\sigma(x)=\sigma \sqrt{x}$ (CIR process). Hence,
our   results can be extended to   models in \eqref{extension-OU}  with a known probability density  for $X_t$ that is explicit or can be approximated. We refer to \cite{LeungLiWang2014CIR} for optimal double stopping of the CIR process. 
 \end{remark}

\section{Incorporating  transaction costs}\label{trans-costs}
In this section, we incorporate    fixed transaction costs when the trader  is pre-commited to the long-short strategy. We assume that she/he may not enter into
the long position, e.g., if we enter very close to $T$ it is more likely that we end up with the loss since we pay transaction costs twice and gain at most small difference from the spread. Thus it would be optimal not to enter at all and get zero payoff.

We formulate this problem sequentially, first assume that there is open long position in the spread which want to liquidate optimally
 \begin{equation} \label{tc-problem-1} \hs{7pc}
V^{1,L,c}(t,x)=\sup \limits_{0\le\zeta\le T-t}\EE \left[\e^{-r\zeta} (X^x_\zeta-c)\right]
 \end{equation}
and the only difference with the problem \eqref{l-s-problem-2} above is that we add fixed trading fee $c>0$. Then having solved \eqref{tc-problem-1} we consider optimal entry problem
  \begin{equation} \label{tc-problem-2} \hs{4pc}
V^{1,E,c}(t,x)=\sup \limits_{0\le\tau\le T-t}\EE \left[\e^{-r\tau}(V^{1,L,c}(t\p \tau,X^x_\tau)\m X^x_\tau\m c)^+\right]
 \end{equation}
as at time $t\p\tau$ we pay $X^x_\tau+c$ and get the long position with the value $V^{1,L,c}(t\p \tau,X^x_\tau)$ and we will go long only if the payoff of this strategy is positive, otherwise
we use our right not to enter into it at this instance.

Observe that  the optimal stopping problem in \eqref{tc-problem-1} is a slight generalization of that in \eqref{l-s-problem-2}.  Therefore, to avoid repetition, we  only state the results below.
\begin{theorem}\label{th:tc}
The value function $V^{1,L,c}$ has the following representation
\begin{align}\label{th-tc-1} \hs{2pc}
V^{1,L,c}(t,x)=\e^{-r(T-t)}(m(T\m t,x)\m c) +\int_0^{T-t}\cK^{1,L,c}(u,x,b^{1,L}(t\p u))du
\end{align}
for $t\in[0,T]$ and $x\in \R$. The optimal exit boundary $b^{1,L,c}$ can be characterized as the unique solution to a nonlinear integral equation
\begin{align}\label{th-tc-2} \hs{2pc}
b^{1,L,c}(t)=\e^{-r(T-t)}(m(T\m t,b^{1,L,c}(t))\m c) +\int_0^{T-t} \cK^{1,L,c}(u,b^{1,L,c}(t),b^{1,L,c}(t\p u))du
\end{align}
for $t\in[0,T]$ in the class of continuous decreasing functions $t\mapsto b^{1,L,c}(t)$ with $b^{1,L,c}(T)=(\mu\theta\p rc)/(\mu\p r)$
where
\begin{align*} \hs{2pc}
&\cK^{1,L,c}(u,x,z)=-\e^{-ru}\EE \big[H^{1,L,c}(X^x_u) I(X^x_u \ge z)\big]\\
&H^{1,L,c}(x)=-(\mu\p r)x+\mu\theta+rc.
\end{align*}
\end{theorem}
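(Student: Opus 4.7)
The plan is to mirror the eight-step proof of Theorem~\ref{th:1} almost verbatim, treating the problem as the same free-boundary problem with a modified linear payoff $x\mapsto x-c$. First I would establish continuity of $V^{1,L,c}$ on $[0,T]\times\R$: convexity in $x$ is inherited from the linearity of the payoff, and the estimate in \eqref{cont-1} goes through unchanged since the constant $c$ cancels in the difference. Next I would apply Ito's formula to $\e^{-r\zeta}(X^x_\zeta-c)$ to obtain
\begin{align*}\hs{3pc}
\EE\bigl[\e^{-r\zeta}(X^x_\zeta-c)\bigr]=(x-c)+\EE\left[\int_0^\zeta \e^{-rs}H^{1,L,c}(X^x_s)\,ds\right],
\end{align*}
which identifies $H^{1,L,c}(x)=-(\mu\p r)x+\mu\theta+rc$ with unique root at $x_c^{*}=(\mu\theta\p rc)/(\mu\p r)$. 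As in paragraph 2 of Theorem~\ref{th:1}, this already shows the continuation region contains $\{x<x_c^{*}\}$ and the exit region is non-empty for every $t\in[0,T)$.

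Second, I would reproduce the right-connectedness and up-connectedness arguments of paragraph 3 (they depend only on the time-homogeneity of $X$, monotonicity of $H^{1,L,c}$, and the order-preserving property of the OU flow via \eqref{Sol}), yielding a decreasing optimal exit boundary $b^{1,L,c}:[0,T]\to\R$ with $x_c^{*}<b^{1,L,c}(t)<\infty$. Smooth-fit $V^{1,L,c}_x(t,b^{1,L,c}(t)\pm)=1$ follows by the same $\eps$-argument as in paragraph 4, since the extra constant $-c$ does not affect derivatives. Continuity of $b^{1,L,c}$ and the identification $b^{1,L,c}(T-)=x_c^{*}$ would be obtained by the De~Angelis-type contradiction of paragraph 5, using now that $H^{1,L,c}$ is strictly negative on $(x_c^{*},\infty)$.

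Third, the free-boundary system \eqref{PDE}--\eqref{FBP2} carries over with payoff $x$ replaced by $x-c$ (the instantaneous-stop value on $\cD^{1,L,c}$ is $x-c$), and I would verify the hypotheses needed to apply Peskir's local time-space formula to $\e^{-rs}V^{1,L,c}(t\p s,X^x_s)$. Taking the expectation at $s=T-t$, using the martingale part vanishes, the PDE kills the interior integrand in $\cC^{1,L,c}$, smooth-fit kills the local time term, and on $\cD^{1,L,c}$ the integrand reduces to $\e^{-ru}H^{1,L,c}(X^x_u)I(X^x_u\ge b^{1,L,c}(t\p u))$, exactly yielding \eqref{th-tc-1} after noting the terminal contribution $\EE[\e^{-r(T-t)}(X^x_{T-t}-c)]=\e^{-r(T-t)}(m(T-t,x)-c)$. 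Equation \eqref{th-tc-2} is then obtained by substituting $x=b^{1,L,c}(t)$ and using the instantaneous stop identity $V^{1,L,c}(t,b^{1,L,c}(t))=b^{1,L,c}(t)-c$.

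Finally, uniqueness of $b^{1,L,c}$ within the stated class of continuous decreasing functions with terminal value $x_c^{*}$ would be proved by Peskir's four-step argument (as cited in \cite{Peskir2005b}): any candidate $\tilde{b}$ solving \eqref{th-tc-2} gives rise via \eqref{th-tc-1} to a function $\tilde V$ satisfying the free-boundary system; one shows $\tilde V\le V^{1,L,c}$ on the respective exit region of $\tilde b$ and $\tilde V\ge V^{1,L,c}$ off it, then derives a contradiction unless $\tilde b\equiv b^{1,L,c}$. The main technical point I expect to scrutinize is the correct terminal value $b^{1,L,c}(T-)=(\mu\theta\p rc)/(\mu\p r)$, which follows precisely because the root of $H^{1,L,c}$ shifts upward by $rc/(\mu\p r)$ relative to the no-cost case; all other steps are routine adaptations of the arguments already carried out for Theorem~\ref{th:1}.
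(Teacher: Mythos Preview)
Your proposal is correct and follows exactly the approach the paper intends: the paper explicitly remarks that \eqref{tc-problem-1} is ``a slight generalization'' of \eqref{l-s-problem-2} and omits the proof ``to avoid repetition'', so reproducing the eight steps of Theorem~\ref{th:1} with the shifted linear payoff $x-c$ (and hence the shifted root $x_c^{*}=(\mu\theta+rc)/(\mu+r)$ of $H^{1,L,c}$) is precisely what is required. One small observation: carrying your argument through with the instantaneous-stop identity $V^{1,L,c}(t,b^{1,L,c}(t))=b^{1,L,c}(t)-c$ actually produces $b^{1,L,c}(t)-c$ on the left-hand side of \eqref{th-tc-2}, which corrects what appears to be a typographical slip in the stated equation (and similarly $b^{1,L}$ in \eqref{th-tc-1} should read $b^{1,L,c}$).
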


Now  we turn to    the entry problem \eqref{tc-problem-2}. It differs from \eqref{l-s-problem-3} in two ways: there is transaction fee $c$ and, more importantly, the right not to enter into the position. Indeed, when $t$ goes $T$, the value of long position $V^{1,L,c}(t,x)$ is close to $x-c$ so that the payoff $V^{1,L,c}(t,x)-x-c$ tends to $-2c$ and thus it almost always not rational to go long near $T$. To formalize this observation, we define the curve $\gamma$ on $[0,T)$ as
\begin{equation}\hs{6pc}
V^{1,L,c}(t,\gamma(t))-\gamma(t)-c=0
\end{equation}
for $t\in[0,T)$. Hence when $x\ge \gamma(t)$ we should not enter as the value is non-positive. From the properties of $V^{1,L,c}(t,x)$, it can be seen that
$\gamma$ is decreasing  with $\gamma(T-)=-\infty$ and that $\gamma<b^{1,L,c}$.

The optionality  is the key component that  precludes us to perform the complete theoretical analysis and prove regularity properties. The problem becomes very challenging and is left for future research. Here,  we conclude the paper with  a number of   open questions with our remarks:

\begin{itemize}

\item \emph{The existence of the optimal entry boundary $b^{1,E,c}$} that separates the continuation and exercise sets. Intuitively,
it is should be true that $\cD^{1,E,c}= \{\, (t,x)\in[0,T)\! \times\!\R:x\le b^{1,E,c}(t)\, \}$.

\item \emph{Monotonicity of the boundary $b^{1,E,c}$}. Most likely, it is decreasing and explodes to $-\infty$ at $T$. For another example of this boundary behavior, we refer to    \cite{LeungLiLiZheng2015}, where the optimal futures trading problem with the transaction costs has  been numerically solved  using a finite-difference method applied to the associated  variational inequality.

\item \emph{Smooth-fit property at $b^{1,E,c}$}. The standard proof uses that the process enters immediately into the exercise region if starts
slightly above $b^{1,E,c}$. However, if the boundary is decreasing, it is not clear that this property holds.
Thus one has to compare the asymptotic behavior of the process at 0 and the slope of the boundary. The latter is unknown. In particular, the slope is very negative near $T$ and
we do not see strong evidence that the smooth fit holds close to $T$.
This open problem is general for optimal stopping problems when the immediate hitting of the boundary is not guaranteed.

\item \emph{Local time}. It is  unclear whether the local time term is present in the expression for the value function and/or in the integral equation for the optimal exercise boundary  for     problem \eqref{tc-problem-2}. The local time term will add significant challenges to  the analysis and numerical implementation of the associated integral equations.
\end{itemize}

\vs{12pt}

\begin{small}
\begin{spacing}{0.7}
\bibliographystyle{apa}
\bibliography{mybib2}
\end{spacing}
\end{small}

\end{document}